\theoremstyle{plain}
\newtheorem{theorem}{Theorem}
\newtheorem{corollary}{Corollary}
\theoremstyle{definition}
\newcommand{\bra}[1]{\langle#1|}
\newcommand{\ket}[1]{|#1\rangle}
\begin{document}
\title{Device-independent verifiable blind quantum computation}
\author{Michal Hajdu\v{s}ek}\email{michal_hajdusek@sutd.edu.sg}
\affiliation{Singapore University of Technology and Design, 8 Somapah Road, Singapore 487372}
\author{Carlos A. P\'erez-Delgado}\email{carlos.perez@quantumlah.org}
\affiliation{Singapore University of Technology and Design, 8 Somapah Road, Singapore 487372}
\author{Joseph F. Fitzsimons}\email{joseph_fitzsimons@sutd.edu.sg}
\affiliation{Singapore University of Technology and Design, 8 Somapah Road, Singapore 487372}
\affiliation{Centre for Quantum Technologies, National University of Singapore, 3 Science Drive 2, Singapore 117543}

\begin{abstract}
As progress on experimental quantum processors continues to advance, the problem of verifying the correct operation of such devices is becoming a pressing concern.
The recent discovery of protocols for verifying computation performed by entangled but non-communicating quantum processors holds the promise of certifying the
correctness of arbitrary quantum computations in a fully device-independent manner. Unfortunately, all known schemes have prohibitive overhead, with resources
scaling as extremely high degree polynomials in the number of gates constituting the computation. Here we present a novel approach based on a combination of
verified blind quantum computation and Bell state self-testing. This approach has dramatically reduced overhead, with resources scaling as only $O(m^4\ln m)$ in the number of gates.
\end{abstract}

\pacs{03.67.Ac, 03.67.Dd}

\maketitle

In recent years, significant progress has been made on the development of quantum information processing technologies. Basic operations with fidelities exceeding those required for fault-tolerant quantum computation have already been demonstrated in both ion-traps \cite{Harty2014,Ballance2014} and superconducting systems \cite{Barends2014}. The number of qubits available in a single device is also approaching the limit of our ability to fully characterize the device, due to the exponential growth in the size of the state space.
Quantum algorithms running on large scale quantum computers hold the promise of dramatic reductions in run time for certain problems. However, as the size of a quantum processor begins to exceed our ability to fully characterize it, the question of whether one can trust results produced in this manner naturally arises. For certain problems, such as integer factorization via Shor's algorithm \cite{Shor1994}, the results of the computation can be verified efficiently by a classical computer. However, this property does not extend to a number of important problems such as the simulation of chemistry and other quantum systems \cite{Aaronson2010}.

While there is currently no known way to verify a single adversarial quantum processor, two distinct approaches have begun to emerge to the problem of verifying quantum processors based on interrogation performed during computation. In the first approach, a quantum processor is repeatedly queried by some other smaller quantum device, generally of fixed size, which can be characterized by conventional means. Aharonov, Ben-Or and Eban introduced an approach to such quantum prover interactive proofs based on quantum authentication using a fixed-sized quantum processor for the verifier \cite{Aharonov:2008}. An alternative route to verification is based on the universal blind quantum computation (UBQC) protocol of Broadbent, Fitzsimons and Kashefi \cite{Broadbent:2009}, which provides an unconditionally secure \cite{Dunjko2014} protocol for hiding quantum computations delegated to a remote server. By constructing the delegated computation to include certain {\em traps} it is possible to verify that the computation has been performed correctly, with exponentially small probability of error \cite{Fitzsimons:2012,Morimae2014}. These protocols have extremely modest requirements for the verifier, simply the ability to prepare or measure single qubits in a finite set of bases. As such, it has proven possible to implement blind computation \cite{Barz:2012} and verification \cite{Barz:2013} in a system of four photonic qubits.

The second approach to verification is based on the interrogation of two or more entangled but non-communicating quantum processors. Reichardt, Unger and Vazirani \cite{Reichardt:2013} showed that arbitrary quantum processing could be verified entirely classically utilizing the statistics of CHSH games~\cite{Clauser1969}. McKague \cite{McKague:2013} discovered an alternative approach using entangled processors based on measurement-based computation, through a self-testing protocol for certain graph states.

These two approaches have complimentary strengths and weaknesses. The second approach provides a stronger security guarantee, since the prohibition on communication between processors can be enforced through space-like separation of the devices. This removes the need for the verifier to place trust in any pre-existing device, no matter how simple, and can be said to be truly device independent: if the tests are passed, the verifier can be confident in the result of the computation even if quantum devices were constructed by an adversary, without need for any characterization. However the known protocols are only efficient in the theoretical sense, the required resources scale as an extremely high degree polynomial of the circuit dimensions. On the other hand, approaches to verification based on blind computation are characterized by far better resource scaling, with overhead scaling as low as linearly in the circuit size. Here we present a hybrid approach, in which self-testing is used to prepare the initial resource for verifiable blind computation, and then the computation is implemented using an existing blind computation scheme. The resulting protocol is entirely device independent, while requiring resources many orders of magnitude less than existing protocols.

\section*{Results}

Before explaining our results it would be useful to discuss what is meant by verifying a quantum computation. A verification protocol is such where a single verifier interacts with one or more provers. The verifier accepts at the end of the protocol if the output is correct, and should reject otherwise. More formally, we say that a protocol with a quantum operator input $U$ and a classical output to be correct if the output is a possible result of measuring the state $U\ket{0}$ in the Pauli-$X$ basis. Following Definition 10 in \cite{Fitzsimons:2012}, given any $0\leq\omega<1$, a protocol is $\omega$-verifiable if for any choice of the prover's strategy the probability $p_{error}$ of the verifier accepting an incorrect outcome is bounded by $\omega$, $p_{error}\leq\omega$.

When purely classical output is required, several blind quantum computation protocols \cite{Broadbent:2009,Fitzsimons:2012,Mantri2013} have the property that they can be decomposed into two phases: an initial state distribution phase, where direct quantum communication is used to prepare a fixed classical-quantum (CQ) correlation between the verifier's classical system and the quantum processor to be tested, followed by an execution phase during which purely classical communication is used to implement and verify the computation. Our approach is to replace the first phase of an existing verification protocol, namely Protocol 6 introduced in \cite{Fitzsimons:2012}, with an alternate method of creating the same correlation which admits a self-testing strategy. The second phase of the protocol remains unaltered, and so security is guaranteed if the initial state can be prepared with sufficiently high fidelity.

Protocol 6 of \citep{Fitzsimons:2012} uses a cylindrical brickwork state as a resource. A vertex is randomly chosen to be a trap qubit. The rest of the qubits in the row containing this trap and the qubits in either the lower or the upper row, depending which is connected to the trap qubit, are prepared in eigenstates of the computational basis. This effectively disentangles the trap qubit from the rest of the resource state which now acts as the usual brickwork state originally used to achieve UBQC. Because the trap qubit is separated from the rest of the state, the client has a finite probability of detecting a cheating server.

Our remote state preparation procedure is inspired by a two-device variant of the UBQC protocol \cite{Broadbent:2009,Morimae:2013dist,Yu-Bo:2015}. Rather than directly transmitting a quantum state from the verifier to the server, measurements on one half of an entangled pair shared between two devices are used to project the remote system in a particular basis, thereby generating the desired correlations. We will assume that the verifier's device consists of a simple measurement device capable of measuring individual qubits in an arbitrary basis, inspired by the blind computation approach taken by Morimae and Fujii \cite{Morimae:2013}, where Bob sends the resource to Alice one qubit at a time and she performs the computation using her device which effectively hides the computation from a malicious Bob. We also require that Bob's subsystems are spatially separated such that measurements on all of them can be performed in a space-like separated manner, and that they be spatially arranged such that this separation is apparent to Alice. We will treat the quantum part of the verifier's device and the quantum processor to be verified as (potentially collaborating but non-communicating) adversaries, yielding a situation in which there is a purely classical verifier and $N+1$ quantum provers. Alice's quantum measuring device is considered one of the provers while the remaining $N$ are in Bob's possession and each of them sequentially sends an EPR pair to be measured by Alice's device. This sequential transmission needs to occur in a sufficiently short time period that all meaurements required by the protocol can be made while respecting spacelike separation between the measurement events for each prover. We shall refer to the verifier as Alice and the quantum device to be verified as Bob, with the distinction between the quantum and classical systems of Alice clear from context. We retain the terminology of the single prover setting, since, due to the asymmetry of the provers, it is natural to think of our approach as a blind quantum computing protocol in which Alice self-tests her own device.

At each step of phase one, Alice will receive a qubit from Bob. She chooses randomly to either use that qubit for verification purposes, or for the purpose of helping remotely create the resource state that will be used in phase two. The two procedures can be intuitively thought of, and are best analysed, as two distinct protocols.  However it is crucial to keep in mind that the two protocols are randomly interwoven and executed in the same phase. This ensures that Bob has no knowledge about which qubits are used for self-testing and which for remote state preparation.

The self-testing procedure compares two experiments. The reference experiment consists of a multipartite state $\ket{\psi}_Q$ on Hilbert space $Q$ and local observable $T_{Q_{j}}$, where $j$ labels the subsystem. The physical experiment consists of a multipartite state $\ket{\psi}_S$ on Hilbert space $S$ and local observable $T_{S_{j}}$. In order to self-test operations and states with complex coefficients we also require Hilbert space $R$, which is used to determine that the devices either both apply the desired operator or they both apply its complex conjugate. The physical experiment is said to be $\epsilon$-equivalent to the reference experiment if there exists a local isometry $\Phi = \Phi_A \otimes \Phi_B$, such that
\begin{eqnarray*}
	\left\| \vphantom{\frac{1}{\sqrt{2}}} \Phi\left(T_{S_{j}}\ket{\psi}_S\right) \right.& - &\left. \frac{1}{\sqrt{2}}\left(\vphantom{\frac{1}{\sqrt{2}}}\ket{junk_1}_S \otimes T_{Q_{j}} \ket{\psi}_Q \ket{00}_R \right.\right. \\
	&+& \left.\left. \ket{junk_2}_S \otimes T^*_{Q_{j}} \ket{\psi}_Q \ket{11}_R\vphantom{\frac{1}{\sqrt{2}}}\right) \vphantom{\frac{1}{\sqrt{2}}}\right\|_2 \leq \epsilon,
\end{eqnarray*}
where $\|\cdot\|_2$ is the vector distance defined for two vectors $\ket{a}$ and $\ket{b}$ as $\|\ket{a} - \ket{b} \|_2 = \sqrt{(\bra{a} - \bra{b})(\ket{a} - \ket{b})}$, and $T^*_{Q_{j}}$ is the complex conjugate of $T_{Q_{j}}$. The state $\ket{junk}_S \otimes T_{Q_{j}} \ket{\psi}_Q$ represents the ideal state up to local isometry.

At every step of phase one, Bob is asked to prepare a Bell pair $\frac{1}{\sqrt{2}}\left(\ket{00} +\ket{11}\right)$, and send half of it to Alice. She will measure the received qubit in a randomly chosen basis $\alpha \in \{X_A,Y_A,Z_A,D_A,E^{\pm}_A,F_A\}$, where $X_A$, $Y_A$, $Z_A$ are Pauli operators, $D_{A}=\frac{1}{\sqrt{2}}(X_{A}+Z_{A})$, $E^{\pm}_{A} = \frac{1}{\sqrt{2}}(\pm X_{A} + Y_{A})$ and $F_{A} = \frac{1}{\sqrt{2}}(Y_{A} + Z_{A})$. Here we use $Y_A = Y$ and $Y_B = -Y$. After Alice measures all of her qubits, she requests Bob to measure all, except $m$ randomly chosen qubits, of his qubits in random basis $\beta \in \{X_B,Y_B,Z_B\}$ and send her the result. Since the state that is being verified is symmetric, Alice does not need to test $Y_AX_B$, $Z_AX_B$, or $Z_AY_B$. Furthermore, measurement settings $D_AY_B$, $E^+_AZ_B$, $E^-_AZ_B$, $F_AX_B$ are not necessary in our analysis. There are in total 14 measurement settings that are required in our self-testing analysis.

Alice collects all measurement results and at the end of phase one she performs a statistical verification, deciding whether, with some confidence $p$, every single one of the qubits she received were part of a state $\tilde{\epsilon}$-close to a Bell pair. If this is the case, she proceeds to phase two. The verification protocol used in this phase is based on the approach of Mayers and Yao in \cite{Mayers1998, Mayers2004}, which was greatly simplified and further developed by McKague {\it et al.} in \cite{McKague2012, McKague2011}. This does not require a trusted measurement device, and can be tailored for any security parameters $p$ and $\epsilon$.

The graph state generation proceeds similarly to the UBQC protocol \cite{Broadbent:2009,Fitzsimons:2012}. However, instead of Alice sending a prepared qubit to Bob, Alice measures her half of the Bell pair in order to collapse Bob's half of the pair to one of the valid input states. Specifically, the verification protocol for classical inputs and outputs which we will make use of (Protocol 6 of \cite{Fitzsimons:2012}), requires Alice to prepare and send to Bob, for each qubit $1 \leq j \leq m$ needed in the computation stage, a state $\ket{\psi_j}$ chosen uniformly at random either from the set  $\{ \ket{0}, \ket{1}\}$ or the set $\big\{\ket{+_{\theta_{j}}}\big\}_{\theta_{j} \in A}$, where $\ket{+_{\theta_{j}}} = \frac{1}{\sqrt{2}}\left( \ket{0} + e^{i\theta_{j}}\ket{1}\right)$ and $A=\{0, \pi/4,\ldots,7\pi/4\}$. Here, instead of directly preparing the state $\ket{\psi_j}$, Alice instructs her measurement device to measure her half of the Bell pair in the basis $\{\ket{+_{\theta_j}}, \ket{-_{\theta_j}}\}$, where $\ket{-_{\theta_j}}=\ket{+_{\theta_{j}+\pi}}$, if she wants to prepare a qubit in the $x-y$ plane, and in the computational basis if she wants to prepare a dummy qubit. If she measures her half to be in the state $\ket{+_{\theta_j}}$, then she knows that Bob's state is (with high probability) in the state $\ket{+_{-\theta_j}}$. Similarly, if she measures $\ket{-_{\theta_j}}$, then she knows that Bob's half is in the state $\ket{-_{-\theta_j}}$. The case of measurements in the computational basis is even simpler. If Alice measures $\ket{s}$, where $s \in \{0,1\}$, then Bob's qubit will be prepared in the same state. Since Alice does not announce the angle $\theta_j$, and since the outcome of her measurement is uniformly random, Bob has no information about the input state. The state that they share is given by a CQ correlation \cite{Broadbent:2009}, with Alice holding a classical label for Bob' state, given by $\frac{1}{2}\sum_{s\in\{0,1\}}\ket{s}\bra{s}_{A}\otimes\ket{s}\bra{s}_{B}$ for dummy qubits and $\frac{1}{8}\sum_{\theta_{j}\in A}\ket{{\theta_{j}}}\bra{{\theta_{j}}}_{A}\otimes\ket{+_{-\theta_{j}}}\bra{+_{-\theta_{j}}}_{B}$ for qubits used in computation. Tracing out Alice's subsystem reveals that Bob's state is maximally mixed.

\begin{algorithm}[t]
\caption{Device-Independent Remote State Preparation}
\label{prot:prelim}

\begin{description}
\item[Input:] Security parameters $p$ and $\epsilon$, and constant $c\geq1$.
\item[Steps:]
\end{description}

\begin{enumerate}
	\item Alice initialises counters $k^{\alpha\beta} = 0$ and a correlation estimator $\hat{C}^{\alpha\beta} = 0$ for all $\alpha$ and $\beta$. She randomly partitions the $N = m + 14c\tilde{n}$ qubits that she will receive from Bob into $m$ qubits to be used for input preparation, and $N - m$ qubits to be used for verification from which she will randomly draw $\tilde{n}$ qubits per measurement setting.
	\item For $1\leq i \leq N$
	\begin{enumerate}
	\item Bob is asked to prepare a Bell pair $\frac{1}{\sqrt{2}}\left(\ket{00} +\ket{11}\right)$, and sends one half to Alice.
	\item If the received qubit is for verification, then
	\begin{enumerate}
		\item Alice randomly chooses an observable $\alpha$ and an observable $\beta$, and increments the counter $k^{\alpha\beta}$.
		\item Alice measures her state according to $\alpha$, recording the outcome $a_{k^{\alpha\beta}}^{\alpha\beta} \in \{-1,1\}$.
		\item Alice instructs Bob to measure his qubit according to $\beta$.
		\item Bob measures his half of the prepared Bell pair in the instructed basis, and sends his result $b_{k^{\alpha\beta}}^{\alpha\beta} \in \{-1,1\}$ to Alice.
		\item Alice updates her correlation estimator for this particular measurement setting $\hat{C}^{\alpha\beta} = \frac{1}{k^{\alpha\beta}}\left[ (k^{\alpha\beta}-1)\hat{C}^{\alpha\beta} + a_{k^{\alpha\beta}}^{\alpha\beta} \cdot b_{k^{\alpha\beta}}^{\alpha\beta} \right]$.
	\end{enumerate}
	\item If the received qubit is for remote state preparation, then
	\begin{enumerate}
		\item Alice measures her half of the Bell pair in the basis $\{ \ket{+_{\theta_{j}}}, \ket{-_{\theta_{j}}}\}$, where $\theta_j$ is chosen uniformly at random from $A$, if Bob's corresponding qubit is to be used for computation or for trap preparation. If his qubit is to be used for dummy qubit preparation, Alice measures in $\{\ket{0}\bra{0}, \ket{1}\bra{1}\}$.
		\item If Alice's measurement outcome is $\ket{+_{\theta_{j}}}$, then Bob's input qubit is $\ket{+_{-\theta_{j}}}$, whereas if her measurement outcome is $\ket{-_{\theta_{j}}}$, then Bob's input qubit is $\ket{-_{-\theta_{j}}}$. If, instead, Alice measures in the computational basis and the outcome is $\ket{s}$, where $s \in \{0,1\}$, then Bob's input qubit is $\ket{s}$. Alice stores a classical label for Bob's state in memory.
	\end{enumerate}
	\end{enumerate}
	\item If $\left(1 - \exp\left(-(\tilde{n}+m)\epsilon^2/8\right)\right)^3\left(1 - 2\exp\left(-(\tilde{n}+m)\epsilon^2/8\right)\right)^{11} \geq p$, and $|\hat{C}^{\alpha\beta} - \mu^{\alpha\beta}| \leq \epsilon$, for all $\alpha$ and $\beta$, and $\mu^{\alpha\beta}$ is the value of ideal correlation for a particular $\alpha$ and $\beta$, then the protocol succeeds, otherwise it aborts. Alice also aborts if she does not gather enough statistics about a certain subset of correlations. The probability of this occurring decreases exponentially with increasing $c$.
\end{enumerate}
\end{algorithm}

\begin{figure}[t]
    \begin{center}
    \includegraphics[scale=0.75]{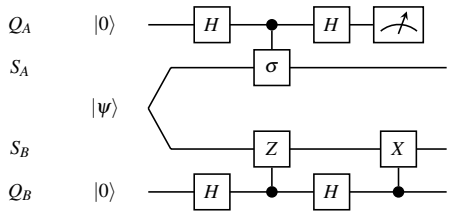}
    \end{center}
    \caption{Local isometry $\Phi$ used to identify the ideal state in Bob's device. The isometry is a reduced swap gate acting individually on Bob's subsystem and we can think of $\Phi$ as extracting the desired state using the measured statistics.}
    \label{figure1}
\end{figure}

Unlike many self-testing schemes, our goal is not to certify that Alice and Bob share an EPR pair up to a local isometry. Instead we use the measured statistics to certify that for a given measurement outcome on Alice's device, Bob is in possession of a state close to the ideal corresponding state up to a local isometry. This is pictured in Fig.~\ref{figure1} for measurements with only real coefficients and in Fig.~\ref{figure3} for measurements with complex complex coefficients.

Protocol \ref{prot:prelim} shows how Alice can remotely prepare single qubit states in Bob's subsystem, up to isometry, without revealing such states to Bob and in a completely device-independent manner. The following theorem shows the correctness of the protocol. That is, unless Alice aborts the protocol, Bob will be in possession, with probability at least $p$, of qubit states $\tilde{\epsilon}$-close to the ideal state.

\begin{theorem}\label{thm:theorem1}
Let $\ket{\psi}$ be the untrusted state shared by Alice and Bob. Given a projection $\Pi^{\pm}_{\sigma_{A}}$ corresponding to the result of Alice's measurement of $\sigma_A$ and that the measured correlations in Protocol~\ref{prot:prelim} are $\chi$-close to the ideal correlations, there exists a local isometry $\bar{\Phi}$ that extracts a state close to the desired state on Bob's side,
\begin{align*}
	\left\|\bar{\Phi}\left(\sqrt{2}\Pi^{\pm}_{\sigma_{S_{A}}}\ket{\psi}_S\right) - \left[(I+M_{S_{A}})\ket{junk_\sigma}_S \Pi^{\pm}_{Z_{Q_{A}}}\Pi^{\pm}_{\sigma_{Q_{B}}}\ket{\phi^{+}}_Q\ket{0}_R \right.\right. \\
	+ \left.\left. (I-M_{S_{B}})\ket{junk_{\sigma}}_S \Pi^{\pm}_{Z_{Q_{A}}}\left(\Pi^{\pm}_{\sigma_{Q_{B}}}\right)^{*}\ket{\phi^{+}}_Q\ket{1}_R\right] \right\|_{2} \leq \tilde{\epsilon},
\end{align*}
where $\tilde{\epsilon} = O(\chi^{1/4})$.
\end{theorem}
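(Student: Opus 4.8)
The plan is to follow the standard self-testing methodology of Mayers--Yao and McKague \emph{et al.}, adapting it so that it certifies Bob's conditional post-measurement state rather than the full shared Bell pair. The argument proceeds in three stages: first, convert the closeness of the 14 measured correlations into a collection of approximate operator identities that the untrusted measurement operators must satisfy when acting on $\ket{\psi}_S$; second, use these operators to build an explicit local isometry $\bar{\Phi}$ of SWAP type; third, apply $\bar{\Phi}$ to Alice's projected state $\sqrt{2}\,\Pi^{\pm}_{\sigma_{S_A}}\ket{\psi}_S$ and bound its distance from the ideal branch superposition by chaining the approximate identities through triangle inequalities.

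For the first stage, I would translate each correlation bound into a vector-norm statement. A correlation $\langle \sigma_A \tau_B\rangle$ that is $\chi$-close to its ideal value $\pm 1$ yields, via $\|(\sigma_{S_A} \mp \tau_{S_B})\ket{\psi}_S\|_2^2 = 2(1 \mp \langle \sigma_A \tau_B\rangle)$, the relation $\sigma_{S_A}\ket{\psi}_S \approx \pm\,\tau_{S_B}\ket{\psi}_S$ up to $O(\sqrt{\chi})$. From the settings involving $D_A$, $E^{\pm}_A$ and $F_A$ I would extract the approximate anticommutation and Pauli-algebra relations (for instance $\{X_{S_B},Z_{S_B}\}\ket{\psi}_S \approx 0$) that are needed for the SWAP gate to behave correctly, each again accurate to $O(\sqrt{\chi})$.

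For the second and third stages, I would define $\bar{\Phi}$ as the reduced SWAP gate of Fig.~\ref{figure1} acting on Bob's subsystem together with the ancillary register $R$, and compute its action on $\sqrt{2}\,\Pi^{\pm}_{\sigma_{S_A}}\ket{\psi}_S$. The register $R$ carries the two branches $\ket{0}_R$ and $\ket{1}_R$ selected by the operators $M_{S_A}$ and $M_{S_B}$, which record whether the devices realise $\Pi^{\pm}_{\sigma_{Q_B}}$ or its complex conjugate $(\Pi^{\pm}_{\sigma_{Q_B}})^{*}$. Substituting the approximate identities from the first stage, each use of an operator relation replaces a physical operator by its ideal counterpart at a cost of $O(\sqrt{\chi})$, and summing these contributions through the triangle inequality reproduces the ideal target state. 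The extra square root that degrades the final bound to $\tilde{\epsilon}=O(\chi^{1/4})$ enters when an operator-expectation bound must be converted into a state-vector bound for the sub-normalised, non-unitary projected vector $\Pi^{\pm}_{\sigma_{S_A}}\ket{\psi}_S$.

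The main obstacle is the simultaneous handling of the complex coefficients and the conditioning on Alice's outcome. Because $\sigma_A$ may be $Y$-like, the relevant operators admit no single real representation, so I must correlate the branch register $R$ with the $M$-operators consistently on both Alice's and Bob's sides, and verify that the projection $\Pi^{\pm}_{\sigma_{S_A}}$ interacts correctly with $\bar{\Phi}$ even though it does not commute with the SWAP operators. Keeping the junk states $\ket{junk_\sigma}_S$ identical across both branches while controlling the accumulated $O(\sqrt{\chi})$ deviations, and ensuring the final conversion to the quarter power is tight, is where the delicate bookkeeping lies.
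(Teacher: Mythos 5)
Your overall architecture --- measured correlations $\to$ approximate operator identities on $\ket{\psi}$ $\to$ SWAP-type isometry with a branch register handling complex conjugation $\to$ triangle-inequality chaining --- is exactly the paper's route (it follows \cite{McKague2012,McKague2011}, with $\bar{\Phi}$ the reduced SWAP composed with a phase kickback). However, your error accounting contains a genuine mistake, and it concerns the very quantity the theorem asserts. You claim in stage one that the approximate anticommutation relations, e.g.\ a bound on $\|(X_{B}Z_{B}+Z_{B}X_{B})\ket{\psi}\|_2$, are accurate to $O(\sqrt{\chi})$. They are not, and cannot be obtained at that accuracy by this method. The correlations give you directly only the expectation-level statement $|\bra{\psi}X_{B}Z_{B}\ket{\psi}| \leq \chi + \sqrt{2\chi}$ (the $\sqrt{2\chi}$ already costs one square root, coming from the substitution $X_{A}\ket{\psi}\approx X_{B}\ket{\psi}$). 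Promoting this to a vector-norm bound on the anticommutator goes through the near-unitarity of $\frac{1}{\sqrt{2}}(X_{B}+Z_{B})$ and the tilted observable $D_{A}$: one bounds $\|(D_{A}-\frac{1}{\sqrt{2}}(X_{B}+Z_{B}))\ket{\psi}\|_2^2$ by a quantity of order $\chi+\sqrt{\chi}$, and taking the square root of \emph{that} yields $\epsilon_1=(1+\sqrt{2})\sqrt{(1+2\sqrt{2})\chi+\sqrt{2\chi}}+2\sqrt{2\chi}=O(\chi^{1/4})$, which is Eq.~(\ref{eq:anticomm_A})--(\ref{eq:anticomm_B}) of the paper (cf.\ Appendix C of \cite{McKague2012}). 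This nested square root, occurring already in your stage one, is the sole origin of the theorem's $\chi^{1/4}$.

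Consequently your stage-three explanation --- that the fourth root enters when an operator-expectation bound is converted into a state-vector bound for the sub-normalised projected vector $\Pi^{\pm}_{\sigma_{S_{A}}}\ket{\psi}_S$ --- is spurious. In the actual proof no further root is taken after stage one: Alice's projector satisfies $2\Pi^{\pm}_{\sigma_{S_{A}}}=I\pm\sigma_{S_{A}}$ and commutes with everything the isometry does on Bob's registers, so each substitution of an approximate identity costs only a constant factor (the operator norm of whatever multiplies it on the left), and the final bound is the purely linear combination $\tilde{\epsilon}=\frac{1}{2}(9\epsilon_1+\epsilon_2)$. If your stage one were correct, this linear chaining would deliver $\tilde{\epsilon}=O(\sqrt{\chi})$, a strictly stronger statement than the theorem --- a sign the plan is internally inconsistent: you assert relations you cannot derive, then compensate with a loss that does not occur. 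The repair is to place the square-root loss where it actually happens: derive the anticommutation relations only at $O(\chi^{1/4})$ via the near-unitary-operator argument, and then check that the stage-three bookkeeping is linear in those errors.
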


\begin{figure}
	\begin{center}
	\includegraphics[scale=0.75]{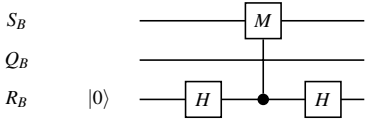}
	\end{center}
	\caption{Isometry $\Phi'$ used to obtain information about complex operations applied by Bob.}
	\label{figure2}
\end{figure}

\begin{proof}
We start with an outline of the main idea behind the proof. By considering the behaviour of the devices as given by the gathered statistics, we show that the operators applied by the devices on the untrusted state $\ket{\psi}$ must follow commutation and anti-commutation relations close to the ideal Pauli operators applied on the ideal shared state $\ket{\phi^+}$, provided that the measured statistics are close to the ideal statistics. Next we construct a local isometry $\bar{\Phi}=\Phi\circ\Phi'$, composed of a reduced swap operation $\Phi$ and a phase-kickback operation $\Phi'$. The local isometry $\bar{\Phi}$ captures the fact that the statistics remain unchanged under local change of basis, addition of ancillae, change of the action of the observables outside of the support of the state, and local embedding of the observables and states in a different Hilbert space. Using the correlations shared by the untrusted devices, we show that $\bar{\Phi}$ extracts a state close to the desired ideal one. The role of the phase-kickback $\Phi'$ is to distinguish when the operations of the devices are complex conjugated.

The first step uses a result obtained by McKague {\it et al.} in \cite{McKague2012} which establishes a bound on the maximum distance between an untrusted shared state and an ideal Bell pair, up to local isometry, given statistics for correlations of measurements $\{X_{A}, Z_{A}, D_{A}\}$ on Alice's subsystem and $\{X_{B}, Z_{B}\}$ on Bob's subsystem. By extending this approach to include measurements with complex coefficients we obtain the maximum distance between the state that Alice's measurement remotely prepares on Bob's subsystem and the ideal input state. Assume the real correlations are all at most $\chi$-far from the ideal case. This means that the actual correlations satisfy $\bra{\psi}X_AX_B\ket{\psi} \geq 1-\chi$, $\bra{\psi}Z_AZ_B\ket{\psi} \geq 1-\chi$, $|\bra{\psi}X_{A}Z_{B}\ket{\psi}| \leq \chi$, $|\bra{\psi}D_{A}X_{B}\ket{\psi}-\frac{1}{\sqrt{2}}| \leq \chi$, $|\bra{\psi}D_{A}Z_{B}\ket{\psi}-\frac{1}{\sqrt{2}}| \leq \chi$ with high probability. This leads to bounds on the action of the measured observables,
\begin{subequations}
	\label{eq:all_relations}
	\begin{eqnarray}
		\| (X_{A}Z_{A} + Z_{A}X_{A})\ket{\psi} \|_2 & \leq & 2\epsilon_1, \label{eq:anticomm_A} \\
		\| (X_{B}Z_{B} + Z_{B}X_{B})\ket{\psi} \|_2 & \leq & 2\epsilon_1 - 4\epsilon_2, \label{eq:anticomm_B} \\
		\| (X_{A}-X_{B})\ket{\psi} \|_2 & \leq & \epsilon_2, \label{eq:comm_X} \\
		\| (Z_{A}-Z_{B})\ket{\psi} \|_2 & \leq & \epsilon_2, \label{eq:comm_Z}
	\end{eqnarray}
\end{subequations}
where $\epsilon_1 = (1+\sqrt{2})\sqrt{(1+2\sqrt{2})\chi+\sqrt{2\chi}}+2\sqrt{2\chi}$ and $\epsilon_2 = \sqrt{2\chi}$. Similar inequalities can be derived also for $Y_A$ and $Y_B$ using the corresponding correlations such as $\bra{\psi}Y_AY_B\ket{\psi}$.

\begin{figure}
	\begin{center}
	\includegraphics[scale=0.75]{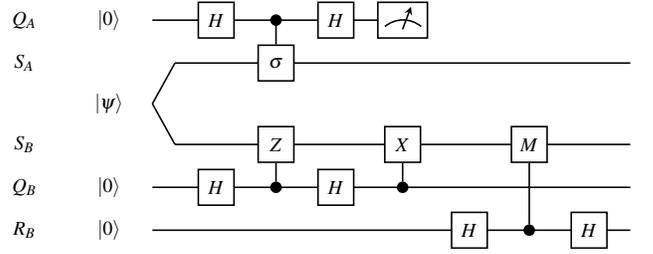}
	\end{center}
	\caption{The total local isometry $\bar{\Phi}=\Phi'\circ\Phi$ used in the self-testing analysis of gathered statistics. Alice performs a measurement of the observable $\sigma$. The isometry $\bar{\Phi}$ extracts the corresponding state on Bob's side as well as the information about the complex phase of the applied measurement.}
	\label{figure3}
\end{figure}

Bounds in Eq.~(\ref{eq:comm_X}) and Eq.~(\ref{eq:comm_Z}) can be obtained by using the definition of the vector norm and using the fact that measurements of $X_AX_B$ and $Z_AZ_B$ are both nearly correlated. Bounds on the anti-commutation of the measurement operators on the same subsystem in Eq.~(\ref{eq:anticomm_A}) and Eq.~(\ref{eq:anticomm_B}) require more work. First, it can be shown from $|\bra{\psi}X_{A}Z_{B}\ket{\psi}| \leq \chi$ that $X_B\ket{\psi}$ and $Z_B\ket{\psi}$ are nearly orthogonal and similarly for Alice's subsytem. This in turn leads to a nearly unitary operator $\frac{1}{\sqrt{2}}(X_B+Z_B)$ which can be used to obtain Eq.~(\ref{eq:anticomm_B}). Details of this derivation can be found in Appendix C in \cite{McKague2012}.

Now we will discuss how the isometry $\bar{\Phi}$, presented in FIG.~\ref{figure3}, extracts the desired state and information about the complex phase of the measurements. This approach is based on a technique first introduced by McKague and Mosca in \cite{McKague2011}. Alice's measurement projects the shared state in register $S$ to $\sqrt{2}\Pi^{\pm}_{\sigma_{S_{A}}}\ket{\psi}_S$, where $\Pi^{\pm}_{\sigma_{S_{A}}}=\frac{1}{2}(I\pm\sigma_{S_{A}})$ is the corresponding projector. For an outcome $a$ of Alice's measurement, the isometry can be expressed as
\begin{eqnarray}
	\label{eq:isometry_full}
	\bar{\Phi}\left(\sqrt{2}\Pi^{\pm}_{\sigma_{S_{A}}}\ket{\psi}_S\right)&=&\frac{1}{4\sqrt{2}}\sum_{k,l\in\{0,1\}}\left(I+(-1)^l M_{S_{B}}\right) \nonumber\\
	&\times& X_{S_{B}}^k\left(I+(-1)^kZ_{S_{B}}\right) \\
	&\times& \left(I+(-1)^a\sigma_{S_{A}}\right)\ket{\psi}_S\ket{ak}_Q\ket{l}_{R_{B}} \nonumber.
\end{eqnarray}
The action of this isometry is two-fold. The first part of the isometry, $\Phi$ depicted in FIG.~\ref{figure1}, swaps the states in registers $S_B$ and $Q_B$. The second part of the isometry, $\Phi'$ shown in FIG.~\ref{figure2}, requires a third register on Bob's side, denoted by $R_B$. The effect of this second part is to extract information about the complex phase of the applied measurement.

The isometry $\Phi'$ does not affect the state of the register $Q_B$ when Alice measures in basis $X_A$ or $Z_A$. In particular, by substituting these bases into Eq.~(\ref{eq:isometry_full}) and considering the ideal case when $\chi=0$ in Eq.~(\ref{eq:all_relations}), we see that the states transform as $\sqrt{2}\Pi^{\pm}_{X_{S_{A}}}\ket{\psi}_S\rightarrow\ket{junk_X}_S2\Pi^{\pm}_{Z_{Q_{A}}}\Pi^{\pm}_{X_{Q_{B}}}\ket{\phi^+}_Q$ and $\sqrt{2}\Pi^{\pm}_{Z_{S_{A}}}\ket{\psi}_S\rightarrow\ket{junk_Z}_S2\Pi^{\pm}_{Z_{Q_{A}}}\Pi^{\pm}_{Z_{Q_{B}}}\ket{\phi^+}_Q$, where we do not care about the state of the register $S$ after the isometry. Using the fact that $Y_A$ anti-commutes with $X_A$ and $Z_A$ (we are still considering the case when $\chi=0$), we can establish that $Y_A\rightarrow Y_{Q_{A}}M_{S_{A}}$, where $M_{S_{A}}$ is a unitary and similarly for $Y_B$. This means that in the ideal case, for any measurement that Alice performs, we have
\begin{eqnarray}\label{eq:isometry_ideal}
	&&\bar{\Phi}\left(\sqrt{2}\Pi^{\pm}_{\sigma_{S_{A}}}\ket{\psi}_S\right) \nonumber\\
	&& = \frac{1}{2}\left[(I+M_B)\ket{junk_{\sigma}}_S\frac{1}{\sqrt{p_{\sigma}}}\Pi^{\pm}_{Z_{Q_{A}}}\Pi^{\pm}_{\sigma_{Q_{B}}}\ket{\phi^+}_Q\ket{0}_{R_{B}} \right.\\
	&& \left.+(I-M_B)\ket{junk_{\sigma}}_S\frac{1}{\sqrt{p_{\sigma}}}\Pi^{\pm}_{Z_{Q_{A}}}\left(\Pi^{\pm}_{\sigma_{Q_{B}}}\right)^*\ket{\phi^+}_Q\ket{1}_{R_{B}}\right], \nonumber
\end{eqnarray}
where $p_{\sigma}=\bra{\phi^+}\Pi^{\pm}_{Z_{Q_{A}}}\Pi^{\pm}_{\sigma_{Q_{A}}}\ket{\phi^+}$. This shows that the register $Q_{B}$ contains the ideal desired state corresponding to Alice's measurement and the complex phase of the measurement is controlled on the state of register $R_B$.

Finally, we can consider the case when $\chi\neq0$. By comparing Eq.~(\ref{eq:isometry_full}) with Eq.~(\ref{eq:isometry_ideal}), along with the bounds in Eq.~(\ref{eq:all_relations}) and repeated application of triangle inequality as in Appendix A of \citep{McKague2012}, we obtain that the distance between the real state and the ideal state, up to some local isometry, is at most $\tilde{\epsilon}=\frac{1}{2}(9\epsilon_1+\epsilon_2)$.
\end{proof}

What remains to be shown is the scaling of Alice's confidence about Bob's state given the gathered statistics from self-testing. We forgo the use of a Chernoff bound as this would require the assumption of independent behaviour and would compromise device-independence. Rather we adopt a similar approach to Pironio \textit{et al.}~\cite{Pironio2010}. The measurement process forms a martingale with bounded increment which allows the application of Azuma-Hoeffding inequality \cite{Hoeffding1963, Azuma1967}.

Denote the set of all EPR pairs that Alice and Bob share by $\mathcal{S}$ with cardinality $|\mathcal{S}|=N$. This set can be partitioned into the subset of pairs used in self-testing that are measured by both Alice and Bob, $\mathcal{S}_{\text{test}}\subset\mathcal{S}$ with $|\mathcal{S}_{\text{test}}|=n$, and the set of pairs used for remote state preparation where only Alice measures her subsystems, $\mathcal{S}_{\text{prep}}\subset\mathcal{S}$ with $|\mathcal{S}_{\text{prep}}|=m$. We also have $N=n+m$.

Consider an arbitrary subset $\mathcal{\tilde{S}}\subseteq\mathcal{S}$ that is again partitioned as above, $\mathcal{\tilde{S}}=\mathcal{\tilde{S}_{\text{test}}}\cup\mathcal{\tilde{S}}_{\text{prep}}$ with corresponding cardinalities $|\mathcal{\tilde{S}}|=\tilde{n}+\tilde{m}$. We can further partition the subset of pairs used in self-testing according to the basis that the qubits are measured in, $\mathcal{\tilde{S}}_{\text{test}}=\cup_{\alpha,\beta}\mathcal{\tilde{S}}^{\alpha\beta}_{\text{test}}$ with $\tilde{n}=\sum_{\alpha,\beta}\tilde{n}^{\alpha\beta}$, where $\mathcal{\tilde{S}}^{\alpha\beta}_{\text{test}}$ is the subset of pairs where Alice measures in $\alpha$ basis and Bob measures in $\beta$ basis. Similarly we can partition $\mathcal{\tilde{S}}_{\text{prep}}=\cup_{\alpha,\beta}\mathcal{\tilde{S}}^{\alpha\beta}_{\text{prep}}$ with $\tilde{m}=\sum_{\alpha,\beta}\tilde{m}^{\alpha\beta}$. This may look strange at first since we have stated that $\mathcal{\tilde{S}}_{\text{prep}}$ is the subset of EPR pairs that get measured only on Alice's side. However, in the upcoming theorem, it is useful to consider hypothetical measurements in basis $\beta$ by Bob.

The average ideal correlation over the subset $\tilde{S}$ can be written as
\begin{equation*}
	\tilde{\mu} = \frac{1}{|\mathcal{\tilde{S}}|}\sum_{\alpha,\beta}(\tilde{n}^{\alpha\beta}+\tilde{m}^{\alpha\beta})\tilde{\mu}^{\alpha\beta},
\end{equation*}
where $\tilde{\mu}^{\alpha\beta}=\bra{\phi^+}\alpha\otimes\beta\ket{\phi^+}$ is the ideal correlation for a pair measured in $\alpha$ by Alice and $\beta$ by Bob. Denoting the classical outcome of Alice's and Bob's measurement on the $i^{\text{th}}$ pair by $a_i\in\{-1,1\}$ and $b_i\in\{-1,1\}$, respectively, we can define a random variable $\hat{C}_i=a_ib_i$. The average measured correlation over the subset $\mathcal{\tilde{S}}$ is then
\begin{eqnarray}\label{eq:measured_corr}
	\frac{1}{|\mathcal{\tilde{S}}|}\sum_{i\in\mathcal{\tilde{S}}}\hat{C}_i & = &  \frac{1}{|\mathcal{\tilde{S}}|}\sum_{i\in\mathcal{\tilde{S}}_{\text{test}}}\hat{C}_i + \frac{1}{|\mathcal{\tilde{S}}|}\sum_{i\in\mathcal{\tilde{S}}_{\text{prep}}}\hat{C}_i \nonumber\\
	& = & \frac{1}{|\mathcal{\tilde{S}}|}\sum_{\alpha,\beta}\tilde{n}^{\alpha\beta}\left( \tilde{\mu}^{\alpha\beta}\pm\epsilon^{\alpha\beta} \right) \\
	& + & \frac{1}{|\mathcal{\tilde{S}}|}\sum_{\alpha,\beta}\tilde{m}^{\alpha\beta}\left( \tilde{\mu}^{\alpha\beta}\pm\epsilon^{\prime\alpha\beta} \right)\nonumber,
\end{eqnarray}
where $\epsilon^{\alpha\beta}$ represents the measured deviation from ideal correlation for measurement $\alpha\beta$. For simplicity we set $\epsilon^{\alpha\beta}=\epsilon$ for all $\alpha,\beta$. $\epsilon^{\prime\alpha\beta}$ is the hypothetical deviation from ideal correlation obtained if Bob measured his qubits of $\mathcal{\tilde{S}}_{\text{prep}}$ as well. Since these qubits are in reality not measured we assume the worst case scenario,
\begin{equation}
	\tilde{\epsilon}^{\prime\alpha\beta} = \left\lbrace
	\begin{array}{ll}
		-2 & \text{when}\quad\tilde{\mu}^{\alpha\beta}=1 \\
		-\left(1+\frac{1}{\sqrt{2}}\right) & \text{when}\quad\tilde{\mu}^{\alpha\beta}=\frac{1}{\sqrt{2}} \\
		1 & \text{when}\quad\tilde{\mu}^{\alpha\beta}=0 \\
		1+\frac{1}{\sqrt{2}} & \text{when}\quad\tilde{\mu}^{\alpha\beta}=-\frac{1}{\sqrt{2}}.
	\end{array}\right.
\end{equation}
Again to simplify the notation we assume the most pessimistic scenario and set $|\epsilon^{\prime\alpha\beta}|=2$ for all $\alpha,\beta$. The real correlation that the devices share is denoted by $C_i(W_A)=\text{Pr}(a_i=b_i|W_A)-\text{Pr}(a_i\neq b_i|W_A)$, where $W_A$ denotes the history of Alice's instructions and measurements. The following theorem bounds the probability that the real average correlation deviates from the average ideal correlation by a large amount given the statistics from Alice's and Bob's measurements.

\begin{theorem}\label{thm:theorem2}
Given an arbitrary subset of EPR pairs, $\mathcal{\tilde{S}}\subseteq\mathcal{S}$, and that the measured average correlation is $\tilde{\mu}\pm\epsilon$, the probability that the real average correlation over this subset is close to the ideal average correlation is given by
\begin{equation}
	\text{Pr}\left( \left|\frac{1}{|\mathcal{\tilde{S}}|}\sum_{i\in\mathcal{\tilde{S}}}C_i(W_A) - \tilde{\mu}\right| \leq \frac{2\tilde{n}\epsilon+m(2+\epsilon)}{\tilde{n}+m} \right) \geq 1-2\delta,
\end{equation}
where $\delta=\exp(-(\tilde{n}+m)\epsilon^2/8)$.
\end{theorem}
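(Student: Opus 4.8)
The plan is to bound the deviation of the real average correlation from the ideal one by splitting it, via the triangle inequality, into (i) the gap between the real correlation $C_i(W_A)$ and the empirically observed value $\hat{C}_i=a_ib_i$, which I control probabilistically, and (ii) the gap between the average of $\hat{C}_i$ and the ideal correlation $\tilde{\mu}$, which is already pinned down deterministically by Eq.~(\ref{eq:measured_corr}). Concretely, writing $|\tilde{\mathcal{S}}|=\tilde{n}+m$ (the relevant subset contains all $m$ preparation pairs, so $\tilde{m}=m$), I would start from
$$\left|\frac{1}{|\tilde{\mathcal{S}}|}\sum_{i\in\tilde{\mathcal{S}}}C_i(W_A) - \tilde{\mu}\right| \leq \left|\frac{1}{|\tilde{\mathcal{S}}|}\sum_{i\in\tilde{\mathcal{S}}}\big(C_i(W_A)-\hat{C}_i\big)\right| + \left|\frac{1}{|\tilde{\mathcal{S}}|}\sum_{i\in\tilde{\mathcal{S}}}\hat{C}_i - \tilde{\mu}\right|,$$
and estimate the two terms in turn.

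For the deterministic term I substitute Eq.~(\ref{eq:measured_corr}) together with $\tilde{\mu}=\frac{1}{|\tilde{\mathcal{S}}|}\sum_{\alpha,\beta}(\tilde{n}^{\alpha\beta}+\tilde{m}^{\alpha\beta})\tilde{\mu}^{\alpha\beta}$. The ideal contributions $\tilde{\mu}^{\alpha\beta}$ cancel exactly, leaving only the deviations: a contribution $\sum_{\alpha,\beta}\tilde{n}^{\alpha\beta}\epsilon$ from the test pairs and a contribution $\sum_{\alpha,\beta}\tilde{m}^{\alpha\beta}|\epsilon^{\prime\alpha\beta}|$ from the preparation pairs. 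Using $\sum_{\alpha,\beta}\tilde{n}^{\alpha\beta}=\tilde{n}$, $\sum_{\alpha,\beta}\tilde{m}^{\alpha\beta}=m$, and the pessimistic bound $|\epsilon^{\prime\alpha\beta}|=2$, this term is at most $(\tilde{n}\epsilon+2m)/(\tilde{n}+m)$.

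For the probabilistic term I set $D_k=\sum_{i=1}^{k}\big(\hat{C}_i-C_i(W_A)\big)$ and argue that $D_k$ is a martingale with respect to the filtration generated by the history $W_A$. The key identity is $C_i(W_A)=E[\hat{C}_i\mid W_A]$, which follows directly from the definition $C_i(W_A)=\mathrm{Pr}(a_i=b_i\mid W_A)-\mathrm{Pr}(a_i\neq b_i\mid W_A)$, so each increment has zero conditional mean and $D_k$ is a genuine martingale. Since $\hat{C}_i\in\{-1,1\}$ and $C_i(W_A)\in[-1,1]$, the increments satisfy $|D_k-D_{k-1}|\leq 2$. The Azuma-Hoeffding inequality then yields $\mathrm{Pr}(|D_N|\geq t)\leq 2\exp\!\big(-t^2/(8(\tilde{n}+m))\big)$; choosing $t=(\tilde{n}+m)\epsilon$ makes the right-hand side equal to $2\delta=2\exp\!\big(-(\tilde{n}+m)\epsilon^2/8\big)$ and bounds the first term by $t/|\tilde{\mathcal{S}}|=\epsilon$. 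Adding the two contributions gives $\epsilon+(\tilde{n}\epsilon+2m)/(\tilde{n}+m)=(2\tilde{n}\epsilon+m(2+\epsilon))/(\tilde{n}+m)$, holding with probability at least $1-2\delta$, which is precisely the claim.

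The main obstacle is establishing the martingale property rigorously in the device-independent setting. Because the two adversarial devices may share memory and correlate their responses across rounds, one cannot assume the $\hat{C}_i$ are independent; conditioning on the \emph{full} history $W_A$ is exactly what absorbs these correlations so that each increment remains a martingale difference, and this is why a Chernoff bound is inadequate while Azuma-Hoeffding applies. A related subtlety is that for the preparation pairs Bob never actually measures, so $\hat{C}_i$ and $b_i$ with $i\in\tilde{\mathcal{S}}_{\text{prep}}$ must be read as the \emph{hypothetical} outcomes he would have produced had he measured in basis $\beta$. One must verify that $C_i(W_A)$ is still a well-defined conditional expectation of this hypothetical variable, since it is this fact that legitimises treating preparation and test pairs uniformly inside a single martingale.
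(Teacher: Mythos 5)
Your proof is correct and takes essentially the same route as the paper's: a martingale in the differences $\hat{C}_i - C_i(W_A)$ with increments bounded by $2$, Azuma--Hoeffding at deviation $\gamma = |\mathcal{\tilde{S}}|\epsilon$, and the pessimistic value $|\epsilon^{\prime\alpha\beta}|=2$ for the unmeasured preparation pairs, combined with the split of the measured average correlation from Eq.~(\ref{eq:measured_corr}). The only cosmetic difference is that you apply a single two-sided Azuma bound (and phrase the split as a triangle inequality), whereas the paper derives the two tails separately from the two oppositely-signed martingales and then widens the resulting interval using $\tilde{m}\leq m$.
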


\begin{proof}
We first look at the case when the measured average correlation is $\tilde{\mu}+\epsilon$. Define a new random variable,
\begin{equation*}
	Y_k = \sum_{i=1}^k\left[C_i(W_A)-\hat{C}_i\right],
\end{equation*}
where $k\in\{0,1,\ldots,|\mathcal{\tilde{S}}|\}$. The expected value of $|Y_n|$ is finite and the conditional expected value is $E(Y_{k+1}|W_A) = Y_k$. Also the random variable has a bounded increment, $c_k=|Y_{k+1}-Y_k|\leq2$ for all $k$. Therefore the random variable $Y_k$ is a martingale with bounded increment an so we can apply the Azuma-Hoeffding inequality
\begin{equation}\label{eq:azuma}
	\text{Pr}\left(Y_{|\mathcal{\tilde{S}}|}\geq\gamma\right)\leq\exp\left(-\frac{\gamma^2}{2\sum_{i\in\mathcal{\tilde{S}}}c_i^2}\right).
\end{equation}
Choosing $\gamma=|\tilde{S}|\epsilon$, Eq.~(\ref{eq:azuma}) can rewritten as
\begin{equation*}
	\text{Pr}\left(\frac{1}{|\mathcal{\tilde{S}}|}\left[\sum_{i\in\mathcal{\tilde{S}}}C_i(W_A)-\sum_{i\in\mathcal{\tilde{S}}}\hat{C}_i\right]\geq\epsilon\right)\leq\exp\left(-\frac{1}{8}|\mathcal{\tilde{S}}|\epsilon^2\right).
\end{equation*}
Splitting the expression for the measured average correlation as in Eq.~(\ref{eq:measured_corr}) and assuming the worst case scenario, $\epsilon^{\prime\alpha\beta}=2$, we get
\begin{eqnarray}\label{eq:bound_oneway}
	\text{Pr}\left( \frac{1}{|\mathcal{\tilde{S}}|}\sum_{i\in\mathcal{\tilde{S}}}C_i(W_A)-\tilde{\mu} \right. & \geq & \left. \frac{2\tilde{n}\epsilon+\tilde{m}(2+\epsilon)}{|\mathcal{\tilde{S}}|} \right) \nonumber \\
	& \leq & \exp\left( -\frac{1}{8}|\mathcal{\tilde{S}}|\epsilon^2 \right)
\end{eqnarray}
Defining the martingale as $Y_k=\sum_{i=1}^k\left[\hat{C}_i-C_i(W_A)\right],$ and following the same steps as above we arrive at a new bound,
\begin{equation}\label{eq:bound_otherway}
	\text{Pr}\left( \frac{1}{|\mathcal{\tilde{S}}|}\sum_{i\in\mathcal{\tilde{S}}}C_i(W_A)-\tilde{\mu} \leq \frac{\tilde{m}(2-\epsilon)}{|\mathcal{\tilde{S}}|} \right) \leq \exp\left( -\frac{1}{8}|\mathcal{\tilde{S}}|\epsilon^2 \right).
\end{equation}
Combining Eq.~(\ref{eq:bound_oneway}) with Eq.~(\ref{eq:bound_otherway}), the probability that the average real correlation is close to the average ideal correlation is
\begin{eqnarray}\label{eq:bound_both}
	&&\text{Pr}\left( \frac{\tilde{m}(2-\epsilon)}{|\mathcal{\tilde{S}}|} \leq \frac{1}{|\mathcal{\tilde{S}}|}\sum_{i\in\mathcal{\tilde{S}}}C_i(W_A)-\tilde{\mu} \leq \frac{2\tilde{n}\epsilon+\tilde{m}(2+\epsilon)}{|\mathcal{\tilde{S}}|} \right) \nonumber\\
	&&\geq 1-2\exp\left(-\frac{1}{8}|\mathcal{\tilde{S}}|\epsilon^2\right).
\end{eqnarray}
The lower endpoint of the interval in Eq.~(\ref{eq:bound_both}) can be extended while keeping the same lower bound on the probability,
\begin{eqnarray*}
	&&\text{Pr}\left(\left|\frac{1}{|\mathcal{\tilde{S}}|}\sum_{i\in\mathcal{\tilde{S}}}C_i(W_A)-\tilde{\mu}\right| \leq \frac{2\tilde{n}\epsilon+\tilde{m}(2+\epsilon)}{|\mathcal{\tilde{S}}|}\right) \\
	&&\geq 1-2\exp\left(-\frac{1}{8}|\mathcal{\tilde{S}}|\epsilon^2\right).
\end{eqnarray*}
Using $\tilde{m}\leq m$, we can extend the interval further
\begin{equation*}
	\text{Pr}\left(\left|\frac{1}{|\mathcal{\tilde{S}}|}\sum_{i\in\mathcal{\tilde{S}}}C_i(W_A)-\tilde{\mu}\right|\leq\frac{2\tilde{n}\epsilon+m(2+\epsilon)}{\tilde{n}+m}\right) \geq 1-2\delta,
\end{equation*}
where $\delta=\exp\left(-(\tilde{n}+m)\epsilon^2/8\right)$. Identical expression is obtained for the case when the average measured correlation is $\tilde{\mu}-\epsilon$.

\end{proof}

If we are interested in a particular correlation $\alpha\beta$, as in Theorem~\ref{thm:theorem1}, we can obtain the appropriate bound and probability by setting $\mathcal{\tilde{S}}$ to include all the pairs that are measured in the basis $\alpha\beta$ and no other pairs where both Alice and Bob measure. Also we set $\chi=\frac{2\tilde{n}\epsilon+m(2+\epsilon)}{\tilde{n}+m}$ in this case. It can be seen that this is sufficient to imply Theorem~\ref{thm:theorem1} by considering an initial set of $\tilde{n}+m$ pairs, used to test a single correlation and then randomly inserting an additional $\tilde{n}$ qubits of each additional correlation to be tested. The tests are then always such that they can be considered to have been performed on subsets of cardinality $\tilde{n}+m$, where the location of the $m$ untested qubits remains random.

Theorem~\ref{thm:theorem1} bounds the maximum distance between the ideal state, shared between Alice and Bob, which we denote $\ket{\psi^{AB}_j}$, and the actual state $\ket{\phi^{AB}_j}$ that they share, up to local isometry on Bob's side (since the classical labels are stored in Alice's classical memory rather than her quantum device). It is important to keep in mind that $\ket{\psi^{AB}_j}$ represents the ideal two-qubit state up to local isometry $\Phi$. In other words, $\ket{\psi^{AB}_j}$ is not itself in general a two-qubit state, just as performing a partial trace of it over Alice's subsystem does not in general result in a single-qubit state on Bob's subsystem. For any fixed value of Alice's classical register the reduced state on Bob's side is pure, denoted by $\ket{\psi^B_j}$, provided he follows the protocol honestly. Expressing the distance between this ideal state and the state obtained from a run of the protocol, in which Bob is not constrained to be honest, in terms of the vector distance makes it straightforward to obtain a lower bound on the fidelity of Bob's input state. Infidelity, introduced into the input state by Bob's dishonest behaviour, leads to an additive error in the probability of the verification protocols of \cite{Fitzsimons:2012} accepting an incorrect outcome.

Each of the verification protocols considered in \cite{Fitzsimons:2012} can be viewed as a quantum channel $\mathcal{P}(\rho)$ which acts on a fixed CQ correlated state. The probability of accepting a state orthogonal to the output in the case of an honest run is then given by the expectation value of the projector $P_\bot$ onto the orthogonal but accepted subspace. The initial state of Bob's subsystem is $\rho^B = p\rho^B_{\leq\tilde{\epsilon}} + (1-p)\rho^B_{>\tilde{\epsilon}}$, where $p=(1-\delta)^3(1-2\delta)^{11}$ is the probability of preparing a state $\rho^B_{\leq\tilde{\epsilon}}$ which is the result from Alice's measurement on her subsystem, where the bipartite system $\ket{\phi^{AB}_j}$ was $\tilde{\epsilon}$-close in vector distance to the ideal state $\ket{\psi^{AB}_j}$ for all $j\in\{1,\ldots,m\}$. $\rho^B_{>\tilde{\epsilon}}$ is defined in a similar fashion. The probability of accepting an incorrect outcome is given by
\begin{equation*}
	p_\text{error} = \text{Tr}\left(P_\bot \mathcal{P}(\rho^B)\right).
\end{equation*}
Substituting in the expression for $\rho^B$, $p_\text{error}$ becomes a sum of three terms. The first term, $p\text{Tr}\left(P_\bot \mathcal{P}(\ket{\psi^B}\bra{\psi^B})\right)$, represents the probability that Alice accepts the incorrect output given the correct input. The second term, $p\text{Tr}\left(P_\bot \mathcal{P}(\rho^B_{\leq\tilde{\epsilon}}-\ket{\psi^B}\bra{\psi^B})\right)$, provides a correction to the first term for a state $\tilde{\epsilon}$-close to the correct input. Lastly, $(1-p)\text{Tr}\left(P_\bot \mathcal{P}(\rho^B_{>\tilde{\epsilon}})\right)$ is the probability of accepting the wrong output given an input state that is more than $\tilde{\epsilon}$-far from the correct input. Evaluating these expressions gives the final bound on the probability of accepting an incorrect output,
\begin{equation*}
	p_\text{error} \leq p(1-\Delta) + p\|\rho^B_{\leq\tilde{\epsilon}}-\ket{\psi^B}\bra{\psi^B}\|_\text{tr} + (1-p),
\end{equation*}

where $1-\Delta$ is the maximum probability of accepting an incorrect outcome using the ideal initial state $\ket{\psi^B}$ and a multi-trap variant of the verification scheme in Protocol 6 in \cite{Fitzsimons:2012}.

\begin{algorithm}[t]
\caption{Device-Independent Blind Quantum Computation}
\label{prot:two}

\begin{description}
\item[Input:] On Alice's side:
\begin{enumerate}
\item Security parameters $p$, $\epsilon$ and $\Delta$.
	\item A quantum computation expressed as a measurement based computation on a cylindrical brickwork state of $m$ qubits, with measurement angles $\phi=(\phi_i)_{1\leq i \leq m}$ with $\phi_i \in A$, incorporating a set of trap qubits $T$ and dummy qubits $D$ chosen as described in the main text and illustrated in Figure \ref{figure4}.
	\item $m$ random variables $\theta_i$ with values taken uniformly at random from $A$.
	\item A fixed function $C_G$ that for each non-output qubit $i$ computes the angle of the measurement to be sent to Bob. This function depends on $\phi_i,\theta_i,r_i,x_i$ and the result of the measurements that have been performed so far, $\mathbf{s}_{< i}$ (the definition of the function $C_G$ is identical to the one found in \cite{Fitzsimons:2012}, and its full description can be found there).
\end{enumerate}
\item[Steps:]
\end{description}

\begin{enumerate}
	\item Alice and Bob engage in Protocol \ref{prot:prelim}.
	\item Bob takes his $m$ states prepared in the previous step and entangles them according to the cylindrical brickwork graph.
	\item Alice sets all the values in $\mathbf{s}$ to be $0$.
	\item For $i: \; 1 \leq i \leq m$
	\begin{enumerate}
		\item Alice computes the angle $\delta_i=C_G(i, \phi_i, \theta_i, r_i, x_i, \mathbf{s})$ and sends it to Bob.
		\item Bob measures qubit $i$ with angle $\delta_i$ and sends Alice the result $b_i$.
		\item Alice sets the value of $s_i$ in $\mathbf{s}$ to be $b_i \oplus r_i$.
	\end{enumerate}
	\item Alice accepts if $b_{t} = r_{t}$, for all traps $t\in T$.
\end{enumerate}
\end{algorithm}

This concludes phase one of our protocol which tests the operation of Alice's device and produces a separable input state on Bob's quantum computer with high probability. Alice then proceeds with the computation by instructing Bob to entangle the prepared qubits into a graph state, and use that graph state to perform verifiable blind computation. The protocol they follow is given in Protocol \ref{prot:two}, which is based on Protocol 6 in \cite{Fitzsimons:2012}. We have modified the protocol found there to have classical input and output only, and in order to make it device-independent. Correctness follows directly from the correctness of the unmodified protocol.

Protocol 6 of \cite{Fitzsimons:2012} uses a single qubit to detect Bob's deviation from Alice's instructions making the protocol $\left(1-\frac{1}{m}\right)$-verifiable. Alice randomly chooses a trap position $t$ on a cylindrical brickwork state and prepares the rest of the qubits in the same and neigbouring row in computational basis turning them into dummy qubits. Instructing Bob to apply entangling operations according to the cylindrical brickwork graph blindly produces a rectangular brickwork state in a tensor product with a single trap that Alice uses for verification.

\begin{figure}
    \begin{center}
    \includegraphics[width=\columnwidth]{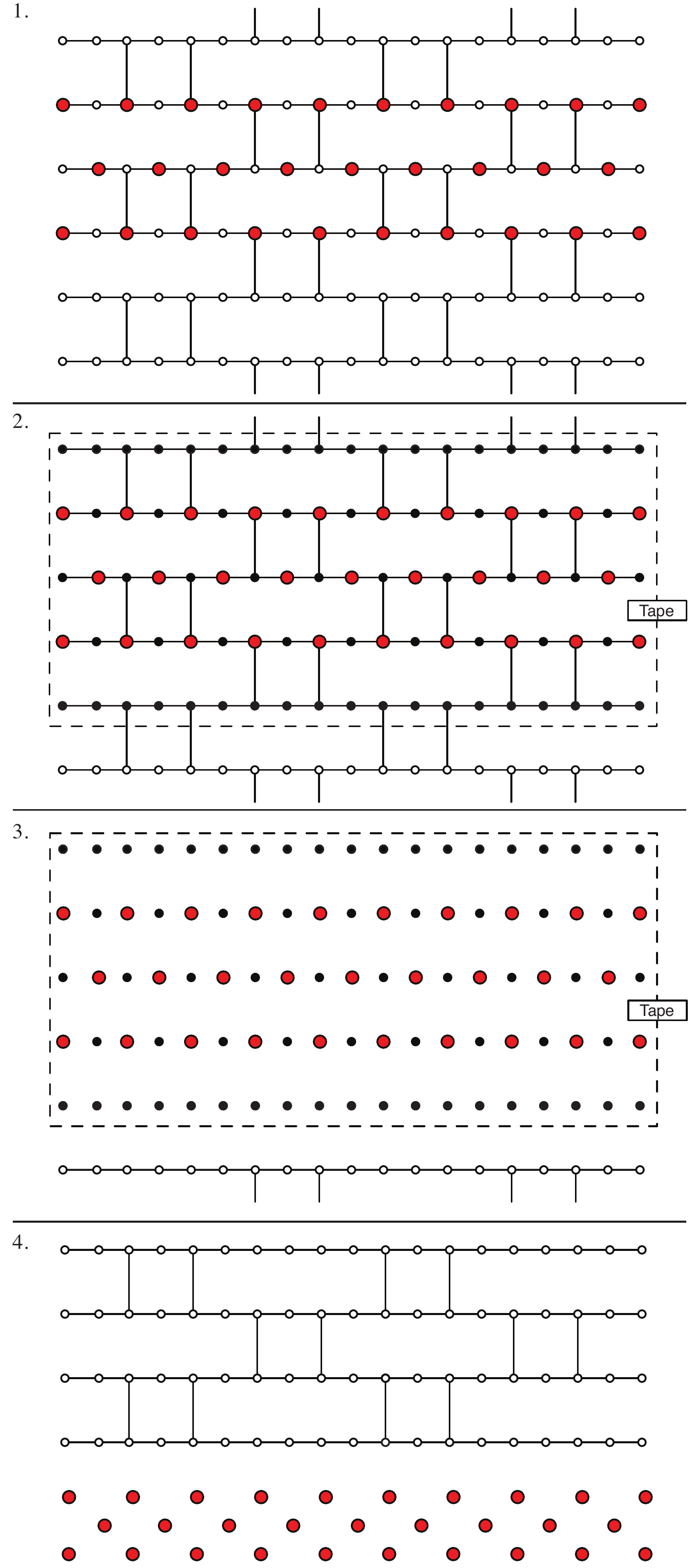}
    \end{center}
    \caption{Multi-trap verification on a cylindrical brickwork state: 1. Alice randomly selects a set of consecutive rows $R$ and assigns trap qubits to every qubit in $R$ corresponding to a random vertex colour in a two colouring of the graph. 2. The remaining qubits in the selected tape are dummy qubits and prepared in the computational basis. 3. Bob's entangling operation according to the cylindrical brickwork graph does not entangle the trap qubits to the rest of the brickwork state. 4. Discarding the dummy qubits, we finally obtain a tensor product of the brickwork state and the trap qubits.}
    \label{figure4}
\end{figure}

We modify this scheme to incorporate multiple trap qubits and obtain a protocol that is $(1-\Delta)$-verifiable, where $0<\Delta<1$ is a constant. Alice starts with a cylindrical brickwork state and chooses a set of trap qubits $T$, by randomly choosing a set $R$ of consecutive rows and fixes a 2-colouring on the graph, taking all qubits in $R$ of colour $C$ are taken to be traps, as illustrated in Fig.~\ref{figure4}. She prepares the remaining qubits located in the same rows as the trap qubits in the computational basis. Additionally Alice also prepares the qubits in rows located directly above and below $R$ in the computational basis. We refer to the set of qubits containing the trap qubits and the dummy qubits prepared in the computational basis as a \textit{tape}. Alice then instructs Bob to entangle the qubits according to the cylindrical brickwork graph which produces the brickwork state in a tensor product with $|T|$ trap qubits.

In order to achieve $(1-\Delta)$-verifiability for constant $\Delta < \frac{1}{2}$, we require that the width of the tape scales in such a way that $|R|$ is a constant fraction $2\Delta$ of the total number of rows of the cylindrical brickwork state. The proof that this leads to a constant probability of accepting an incorrect outcome of the computation follows precisely the same steps as the proof of Theorem 8 in \cite{Fitzsimons:2012} which proves $\left(1-\frac{1}{m}\right)$-verifiability of a single-trap protocol, where the increased verifiability stems directly from the increased probability that any given qubit is a trap qubit. Correctness also follows directly from the correctness of Protocol 6 in \cite{Fitzsimons:2012}. Combining the multi-trap verification on cylindrical brickwork state with the self-testing procedure leads to the following corollary.

\begin{corollary}\label{cor:bound2}
Protocol \ref{prot:two} is $(1-p\Delta+2p\sqrt{m}\tilde{\epsilon})$-verifiable, that is the probability that an incorrect outcome is accepted at the end of the verification procedure is
\begin{equation*}
	p_{error} \leq 1 - p\Delta + 2p\sqrt{m}\tilde{\epsilon},
\end{equation*}
where $p\geq(1-\delta)^{3m}(1-2\delta)^{11m}$ is Alice's confidence that Bob is in possession of an $m$-qubit input state close to the ideal one, $\delta=\exp\left(-\frac{1}{8}\epsilon^2(\tilde{n}+m)\right)$, and $\tilde{n}+m = O(m^4\ln m)$ is the number of Bell pairs needed for self-testing per measurement setting.
\end{corollary}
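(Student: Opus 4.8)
The plan is to build on the three-term bound on $p_\text{error}$ derived in the discussion preceding the statement,
\[
p_\text{error} \leq p(1-\Delta) + p\,\big\|\rho^B_{\leq\tilde\epsilon}-\ketbra{\psi^B}{\psi^B}\big\|_\text{tr} + (1-p),
\]
and to control each of the three contributions before reassembling them. The first term is handled by the $(1-\Delta)$-verifiability of the multi-trap scheme acting on the \emph{ideal} product input $\ket{\psi^B}$, established through the analogue of Theorem~8 of \cite{Fitzsimons:2012}, while the third term is precisely the self-testing failure probability $1-p$. The only genuinely new quantity is therefore the trace-norm correction in the middle term, so the heart of the argument is to show that it does not exceed $2\sqrt m\,\tilde\epsilon$ and that $p$ may be taken as $(1-\delta)^{3m}(1-2\delta)^{11m}$.

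To bound the middle term I would lift the single-qubit guarantee of Theorem~\ref{thm:theorem1} to the full $m$-qubit register. For each $j$, fixing Alice's classical label and tracing out her subsystem turns the theorem's bound into the statement that Bob's extracted qubit lies within vector distance $\tilde\epsilon$ of the ideal $\ket{\psi^B_j}$, hence has fidelity $F_j\ge1-\tilde\epsilon^2/2$ with it. Multiplicativity of fidelity over tensor factors gives a product fidelity $F\ge(1-\tilde\epsilon^2/2)^m\ge1-\tfrac12 m\tilde\epsilon^2$ between the prepared register $\rho^B_{\leq\tilde\epsilon}$ and $\ket{\psi^B}$. Bounding the trace norm of the difference of two pure states by $2\sqrt{1-F^2}\le2\sqrt{2(1-F)}$ then yields $\big\|\rho^B_{\leq\tilde\epsilon}-\ketbra{\psi^B}{\psi^B}\big\|_\text{tr}\le2\sqrt m\,\tilde\epsilon$; since $\|P_\bot\|\le1$ the channel $\mathcal P$ cannot increase this, so the middle contribution is at most $2p\sqrt m\,\tilde\epsilon$.

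For the confidence factor $p$ I would combine Theorem~\ref{thm:theorem2} across all tested correlations and all input qubits. Together with the subset remark following that theorem---which lets each individual test be viewed as acting on a set of cardinality $\tilde n+m$ with the position of the untested qubits kept random---Theorem~\ref{thm:theorem2} supplies a per-setting success probability of $1-2\delta$, sharpening to $1-\delta$ for the three one-sided correlations. Taking the $14$ settings over all $m$ input qubits and combining them multiplicatively gives $p\ge(1-\delta)^{3m}(1-2\delta)^{11m}$ as the probability that every extracted qubit is simultaneously $\tilde\epsilon$-close to ideal. Substituting the three bounds back into the decomposition gives $p_\text{error}\le p(1-\Delta)+2p\sqrt m\,\tilde\epsilon+(1-p)=1-p\Delta+2p\sqrt m\,\tilde\epsilon$, which is exactly the claimed verifiability.

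The last, and I expect hardest, step is the resource count $\tilde n+m=O(m^4\ln m)$, the difficulty being that two competing requirements must be met at once. Nontrivial verifiability forces the new error term below $p\Delta$, hence $\tilde\epsilon=O(1/\sqrt m)$; combined with $\tilde\epsilon=O(\chi^{1/4})$ from Theorem~\ref{thm:theorem1} this demands $\chi=O(m^{-2})$, and inserting the identification $\chi=\frac{2\tilde n\epsilon+m(2+\epsilon)}{\tilde n+m}$ forces both $\epsilon=O(m^{-2})$ and $\tilde n=\Omega(m^3)$. Simultaneously $p$ must stay bounded away from zero, so the $m$-fold product requires $\delta=O(1/m)$, i.e. $\tfrac18\epsilon^2(\tilde n+m)=\Omega(\ln m)$ and hence $\tilde n+m=\Omega(\ln m/\epsilon^2)$. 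Balancing $\epsilon=O(m^{-2})$ against this inequality is precisely where the $m^4\ln m$ appears, and the delicate part of the proof is to verify that these constraints are mutually consistent without the phase constant in the trace-norm bound, or the choice of $\Delta$, pushing the verifiability back above a useful threshold.
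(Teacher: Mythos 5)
Your proposal follows essentially the same route as the paper's proof: the same three-term decomposition, the same chain from the vector-distance bound of Theorem~\ref{thm:theorem1} through monotonicity of fidelity under partial trace and multiplicativity over tensor factors to the Fuchs--van de Graaf trace-norm bound $2\sqrt{m}\,\tilde{\epsilon}$, the same accounting of the $3$ one-sided and $11$ two-sided correlation tests for $p$, and the same scaling analysis ($\tilde{\epsilon}=O(m^{-1/2})$, $\tilde{\epsilon}=O(\chi^{1/4})$, $\epsilon=O(m^{-2})$, $\tilde{n}+m=\Theta(\epsilon^{-2}\ln m)=O(m^{4}\ln m)$). The only differences are cosmetic---you work with the square-root fidelity convention where the paper uses squared fidelity, and you spell out the identification $\chi=\frac{2\tilde{n}\epsilon+m(2+\epsilon)}{\tilde{n}+m}$ slightly more explicitly---so the argument is correct and matches the paper.
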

\begin{proof}
Expanding the expression for the bound on the vector distance between the shared state and the ideal state up to isometry $\| \ket{\psi^{AB}_j} - \ket{\phi^{AB}_j} \|_2 \leq \tilde{\epsilon}$, for all $j$, we get $\text{Re}\langle \psi^{AB}_j|\phi^{AB}_j\rangle \geq 1-\frac{1}{2}\tilde{\epsilon}^2$, which can be used to obtain a lower bound on the fidelity between the states,
\begin{equation*}
	F(\ket{\psi^{AB}_j},\ket{\phi^{AB}_j}) \geq \left(1-\frac{1}{2}\tilde{\epsilon}^2\right)^2,
\end{equation*}
where we used $F(\ket{\psi^{AB}_j},\ket{\phi^{AB}_j}) = |\langle\psi^{AB}_j|\phi^{AB}_j\rangle|^2 \geq \text{Re}^2\langle\psi^{AB}_j|\phi^{AB}_j\rangle$. The fidelity is non-decreasing under partial trace which leads immediately to $F(\ket{\psi^B_j},\rho^B_j) \geq (1-\frac{1}{2}\tilde{\epsilon}^2)^2$. Fidelity is also multiplicative under tensor products which leads to the following bound on the fidelity of the whole $m$-qubit input state
\begin{eqnarray*}
	F(\ket{\psi^B},\rho_{\rho_{\leq\tilde{\epsilon}}}^B) & = & \prod_{j=1}^m F\left(\ket{\psi^B_j},\rho^B_j\right), \\
	& \geq & \left(1 - \frac{1}{2}\tilde{\epsilon}^2\right)^{2m}, \\
	& \geq & 1 - m\tilde{\epsilon}^2,
\end{eqnarray*}
where we take $\tilde{\epsilon}$ to be the common upper bound on for all $j$. Using the relationship between trace distance and fidelity,
\begin{equation*}
	\frac{1}{2}\|\rho_{\leq\tilde{\epsilon}}^B - \ket{\psi^B}\bra{\psi^B}\|_\text{tr} \leq \sqrt{1 - F(\ket{\psi^B},\rho_{\leq\tilde{\epsilon}}^B)},
\end{equation*}
it follows that
\begin{equation*}
\|\rho_{\leq\tilde{\epsilon}}^B - \ket{\psi^B}\bra{\psi^B}\|_\text{tr} \leq 2 \sqrt{m} \tilde{\epsilon}.
\end{equation*}
Therefore the total probability of Alice accepting an incorrect outcome is bounded by
\begin{eqnarray}\label{eq:p_error}
	p_{error} &\leq& p(1-\Delta) + p\left\|\rho^B_{\leq\tilde{\epsilon}}-\ket{\psi^B}\bra{\psi^B}\right\|_\text{tr} + (1-p) \nonumber \\
	&\leq& 1 - p\Delta + 2p\sqrt{m}\tilde{\epsilon}.
\end{eqnarray}
Expanding the expression for Alice's confidence $p$ and demanding that the confidence be close to unity, we obtain $\tilde{n}+m=\Theta(\epsilon^{-2}\ln m)$.  We would like to now find a scaling relationship between $\epsilon$ and the input size $m$. Requiring that the last term in Eq.~(\ref{eq:p_error}) scale as a constant bounded from above by $p\Delta$ leads to $\tilde{\epsilon}=O(m^{-1/2})$. Using Theorem \ref{thm:theorem1} we know that $\tilde{\epsilon}=O(\epsilon^{1/4})$ which means that $\epsilon=O(m^{-2})$. This finally leads to $\tilde{n}+m=O(m^{4}\ln m)$ which is the combined number of input qubits and the number of Bell pairs needed per measurement setting in the verification of remote state preparation.
\end{proof}

\section*{Conclusion}

Our scheme offers a large improvement over current schemes \cite{McKague:2013, Reichardt:2013} that achieve a similar function. Splitting the computation into two parts, namely device-independent remote state preparation followed by authenticated computation, presents a distinct advantage. At all stages of phase one we only need to self-test individual EPR pairs, unlike the approach in \cite{McKague:2013} that self-tests the entire graph state. This results in the number of repetitions of their protocol to be $N \geq 3^{16} \cdot 10^{38.7} \cdot n^{22}$, where $n$ is the number of vertices of the graph. It is worth noting that the client in \cite{McKague:2013} is completely classical and the protocol requires $n$ non-communicating servers, each holding one vertex of the graph state. The protocol of Reichardt {\it et al.}~\cite{Reichardt:2013} also considers a fully classical client and a constant number of non-communicating quantum servers. The client relies on CHSH games to test the shared states as well as the operation of the servers. To authenticate the whole computation, the client uses the servers to implement state and process tomography. This introduces a large overhead, where the leading term is of the order at least $n^{8192}$, where $n$ counts the number of gates needed to implement the computation.

We note that in recent and independent work from the results reported here, Gheorghiu, Kashefi and Wallden have also considered splitting the verification problem into a remote state preparation followed by authenticated blind computation~\cite{Gheorghiu2015}. A major difference between their results and ours, is that their protocol utilizes the CHSH rigidity approach of \cite{Reichardt:2013}, rather than Bell pair self-testing, resulting in overhead which scales as $n^c$ for a constant $c>2048$.

In contrast to these other methods, the protocol described here requires an overhead in resources that scales as $O(m^4\ln m)$ where $m$ is the number of vertices used in the computation. While this represents a drastic increase in efficiency over other existing schemes, further reducing overhead remains an important open question.

\section*{Acknowledgements}

The authors acknowledge support from Singapore's National Research Foundation and Ministry of Education. This material is based on research funded by the Singapore National Research Foundation under NRF Award NRF-NRFF2013-01.

%\bibliographystyle{apsrev}
%\bibliography{final}

\begin{thebibliography}{27}
\makeatletter
\providecommand \@ifxundefined [1]{
 \@ifx{#1\undefined}
}
\providecommand \@ifnum [1]{
 \ifnum #1\expandafter \@firstoftwo
 \else \expandafter \@secondoftwo
 \fi
}
\providecommand \@ifx [1]{
 \ifx #1\expandafter \@firstoftwo
 \else \expandafter \@secondoftwo
 \fi
}
\providecommand \natexlab [1]{#1}
\providecommand \enquote  [1]{``#1''}
\providecommand \bibnamefont  [1]{#1}
\providecommand \bibfnamefont [1]{#1}
\providecommand \citenamefont [1]{#1}
\providecommand \href@noop [0]{\@secondoftwo}
\providecommand \href [0]{\begingroup \@sanitize@url \@href}
\providecommand \@href[1]{\@@startlink{#1}\@@href}
\providecommand \@@href[1]{\endgroup#1\@@endlink}
\providecommand \@sanitize@url [0]{\catcode `\\12\catcode `\$12\catcode
  `\&12\catcode `\#12\catcode `\^12\catcode `\_12\catcode `\%12\relax}
\providecommand \@@startlink[1]{}
\providecommand \@@endlink[0]{}
\providecommand \url  [0]{\begingroup\@sanitize@url \@url }
\providecommand \@url [1]{\endgroup\@href {#1}{\urlprefix }}
\providecommand \urlprefix  [0]{URL }
\providecommand \Eprint [0]{\href }
\providecommand \doibase [0]{http://dx.doi.org/}
\providecommand \selectlanguage [0]{\@gobble}
\providecommand \bibinfo  [0]{\@secondoftwo}
\providecommand \bibfield  [0]{\@secondoftwo}
\providecommand \translation [1]{[#1]}
\providecommand \BibitemOpen [0]{}
\providecommand \bibitemStop [0]{}
\providecommand \bibitemNoStop [0]{.\EOS\space}
\providecommand \EOS [0]{\spacefactor3000\relax}%
\providecommand \BibitemShut  [1]{\csname bibitem#1\endcsname}%
\let\auto@bib@innerbib\@empty
%</preamble>
\bibitem [{\citenamefont {Harty}\ \emph {et~al.}(2014)\citenamefont {Harty},
  \citenamefont {Allcock}, \citenamefont {Ballance}, \citenamefont {Guidoni},
  \citenamefont {Janacek}, \citenamefont {Linke}, \citenamefont {Stacey},\ and\
  \citenamefont {Lucas}}]{Harty2014}%
  \BibitemOpen
  \bibfield  {author} {\bibinfo {author} {\bibfnamefont {T.~P.}\ \bibnamefont
  {Harty}}, \bibinfo {author} {\bibfnamefont {D.~T.~C.}\ \bibnamefont
  {Allcock}}, \bibinfo {author} {\bibfnamefont {C.~J.}\ \bibnamefont
  {Ballance}}, \bibinfo {author} {\bibfnamefont {L.}~\bibnamefont {Guidoni}},
  \bibinfo {author} {\bibfnamefont {H.~A.}\ \bibnamefont {Janacek}}, \bibinfo
  {author} {\bibfnamefont {N.~M.}\ \bibnamefont {Linke}}, \bibinfo {author}
  {\bibfnamefont {D.~N.}\ \bibnamefont {Stacey}}, \ and\ \bibinfo {author}
  {\bibfnamefont {D.~M.}\ \bibnamefont {Lucas}},\ }\bibfield  {title} {\enquote
  {\bibinfo {title} {High-{F}idelity {P}reparation, {G}ates, {M}emory, and
  {R}eadout of a {T}rapped-{I}on {Q}uantum {B}it},}\ }\href@noop {} {\bibfield
  {journal} {\bibinfo  {journal} {Phys. Rev. Lett.}\ }\textbf {\bibinfo
  {volume} {113}},\ \bibinfo {pages} {220501} (\bibinfo {year}
  {2014})}\BibitemShut {NoStop}%
\bibitem [{\citenamefont {Ballance}\ \emph {et~al.}(2014)\citenamefont
  {Ballance}, \citenamefont {Harty}, \citenamefont {Linke},\ and\ \citenamefont
  {Lucas}}]{Ballance2014}%
  \BibitemOpen
  \bibfield  {author} {\bibinfo {author} {\bibfnamefont {C.~J.}\ \bibnamefont
  {Ballance}}, \bibinfo {author} {\bibfnamefont {T.~P.}\ \bibnamefont {Harty}},
  \bibinfo {author} {\bibfnamefont {N.~M.}\ \bibnamefont {Linke}}, \ and\
  \bibinfo {author} {\bibfnamefont {D.~M.}\ \bibnamefont {Lucas}},\ }\href@noop
  {} {\enquote {\bibinfo {title} {High-fidelity two-qubit quantum logic gates
  using trapped calcium-43 ions},}\ } (\bibinfo {year} {2014}),\ \Eprint
  {http://arxiv.org/abs/quant-ph/1406.5473} {arXiv:quant-ph/1406.5473}
  \BibitemShut {NoStop}%
\bibitem [{\citenamefont {Barends}\ \emph {et~al.}(2014)\citenamefont
  {Barends}, \citenamefont {Kelly}, \citenamefont {Megrant}, \citenamefont
  {Veitia}, \citenamefont {Sank}, \citenamefont {Jeffrey}, \citenamefont
  {White}, \citenamefont {Mutus}, \citenamefont {Fowler}, \citenamefont
  {Campbell}, \citenamefont {Chen}, \citenamefont {Chen}, \citenamefont
  {Chiaro}, \citenamefont {Dunsworth}, \citenamefont {Neill}, \citenamefont
  {O'Malley}, \citenamefont {Roushan}, \citenamefont {Vainsencher},
  \citenamefont {Wenner}, \citenamefont {Korotkov}, \citenamefont {Cleland},\
  and\ \citenamefont {Martinis}}]{Barends2014}%
  \BibitemOpen
  \bibfield  {author} {\bibinfo {author} {\bibfnamefont {R.}~\bibnamefont
  {Barends}}, \bibinfo {author} {\bibfnamefont {J.}~\bibnamefont {Kelly}},
  \bibinfo {author} {\bibfnamefont {A.}~\bibnamefont {Megrant}}, \bibinfo
  {author} {\bibfnamefont {A.}~\bibnamefont {Veitia}}, \bibinfo {author}
  {\bibfnamefont {D.}~\bibnamefont {Sank}}, \bibinfo {author} {\bibfnamefont
  {E.}~\bibnamefont {Jeffrey}}, \bibinfo {author} {\bibfnamefont {T.~C.}\
  \bibnamefont {White}}, \bibinfo {author} {\bibfnamefont {J.}~\bibnamefont
  {Mutus}}, \bibinfo {author} {\bibfnamefont {A.~G.}\ \bibnamefont {Fowler}},
  \bibinfo {author} {\bibfnamefont {B.}~\bibnamefont {Campbell}}, \bibinfo
  {author} {\bibfnamefont {Y.}~\bibnamefont {Chen}}, \bibinfo {author}
  {\bibfnamefont {Z.}~\bibnamefont {Chen}}, \bibinfo {author} {\bibfnamefont
  {B.}~\bibnamefont {Chiaro}}, \bibinfo {author} {\bibfnamefont
  {A.}~\bibnamefont {Dunsworth}}, \bibinfo {author} {\bibfnamefont
  {C.}~\bibnamefont {Neill}}, \bibinfo {author} {\bibfnamefont
  {P.}~\bibnamefont {O'Malley}}, \bibinfo {author} {\bibfnamefont
  {P.}~\bibnamefont {Roushan}}, \bibinfo {author} {\bibfnamefont
  {A.}~\bibnamefont {Vainsencher}}, \bibinfo {author} {\bibfnamefont
  {J.}~\bibnamefont {Wenner}}, \bibinfo {author} {\bibfnamefont {A.~N.}\
  \bibnamefont {Korotkov}}, \bibinfo {author} {\bibfnamefont {A.~N.}\
  \bibnamefont {Cleland}}, \ and\ \bibinfo {author} {\bibfnamefont {J.~M.}\
  \bibnamefont {Martinis}},\ }\bibfield  {title} {\enquote {\bibinfo {title}
  {Superconducting quantum circuits at the surface code threshold for fault
  tolerance},}\ }\href@noop {} {\bibfield  {journal} {\bibinfo  {journal}
  {Nature (London)}\ }\textbf {\bibinfo {volume} {508}},\ \bibinfo {pages}
  {500--503} (\bibinfo {year} {2014})}\BibitemShut {NoStop}%
\bibitem [{\citenamefont {Shor}(1994)}]{Shor1994}%
  \BibitemOpen
  \bibfield  {author} {\bibinfo {author} {\bibfnamefont {P.~W.}\ \bibnamefont
  {Shor}},\ }\bibfield  {title} {\enquote {\bibinfo {title} {Algorithms for
  quantum computation: {D}iscrete logarithms and factoring},}\ }in\ \href@noop
  {} {\emph {\bibinfo {booktitle} {Foundations of Computer Science, 1994
  Proceedings., 35th Annual Symposium on}}}\ (\bibinfo {organization} {IEEE},\
  \bibinfo {year} {1994})\ pp.\ \bibinfo {pages} {124--134}\BibitemShut
  {NoStop}%
\bibitem [{\citenamefont {Aaronson}(2010)}]{Aaronson2010}%
  \BibitemOpen
  \bibfield  {author} {\bibinfo {author} {\bibfnamefont {S.}~\bibnamefont
  {Aaronson}},\ }\bibfield  {title} {\enquote {\bibinfo {title} {{BQP} and the
  polynomial hierarchy},}\ }in\ \href@noop {} {\emph {\bibinfo {booktitle}
  {Proceedings of the forty-second ACM symposium on Theory of computing}}}\
  (\bibinfo {organization} {ACM},\ \bibinfo {year} {2010})\ pp.\ \bibinfo
  {pages} {141--150}\BibitemShut {NoStop}%
\bibitem [{\citenamefont {Aharonov}\ \emph {et~al.}(2010)\citenamefont
  {Aharonov}, \citenamefont {Ben-Or},\ and\ \citenamefont
  {Eban}}]{Aharonov:2008}%
  \BibitemOpen
  \bibfield  {author} {\bibinfo {author} {\bibfnamefont {D.}~\bibnamefont
  {Aharonov}}, \bibinfo {author} {\bibfnamefont {M.}~\bibnamefont {Ben-Or}}, \
  and\ \bibinfo {author} {\bibfnamefont {E.}~\bibnamefont {Eban}},\ }\bibfield
  {title} {\enquote {\bibinfo {title} {Interactive {P}roofs {F}or {Q}uantum
  {C}omputations},}\ }in\ \href@noop {} {\emph {\bibinfo {booktitle}
  {Proceedings of Innovation in Computer Science}}}\ (\bibinfo  {publisher}
  {Tsinghua University Press},\ \bibinfo {year} {2010})\ p.\ \bibinfo {pages}
  {543}\BibitemShut {NoStop}%
\bibitem [{\citenamefont {Broadbent}\ \emph {et~al.}(2009)\citenamefont
  {Broadbent}, \citenamefont {Fitzsimons},\ and\ \citenamefont
  {Kashefi}}]{Broadbent:2009}%
  \BibitemOpen
  \bibfield  {author} {\bibinfo {author} {\bibfnamefont {A.}~\bibnamefont
  {Broadbent}}, \bibinfo {author} {\bibfnamefont {J.~F.}\ \bibnamefont
  {Fitzsimons}}, \ and\ \bibinfo {author} {\bibfnamefont {E.}~\bibnamefont
  {Kashefi}},\ }\bibfield  {title} {\enquote {\bibinfo {title} {Universal blind
  quantum computation},}\ }in\ \href@noop {} {\emph {\bibinfo {booktitle}
  {Foundations of Computer Science, 2009. FOCS'09. 50th Annual IEEE Symposium
  on}}}\ (\bibinfo {organization} {IEEE},\ \bibinfo {year} {2009})\ pp.\
  \bibinfo {pages} {517--526}\BibitemShut {NoStop}%
\bibitem [{\citenamefont {Dunjko}\ \emph {et~al.}(2014)\citenamefont {Dunjko},
  \citenamefont {Fitzsimons}, \citenamefont {Portmann},\ and\ \citenamefont
  {Renner}}]{Dunjko2014}%
  \BibitemOpen
  \bibfield  {author} {\bibinfo {author} {\bibfnamefont {V.}~\bibnamefont
  {Dunjko}}, \bibinfo {author} {\bibfnamefont {J.~F.}\ \bibnamefont
  {Fitzsimons}}, \bibinfo {author} {\bibfnamefont {C.}~\bibnamefont
  {Portmann}}, \ and\ \bibinfo {author} {\bibfnamefont {R.}~\bibnamefont
  {Renner}},\ }\bibfield  {title} {\enquote {\bibinfo {title} {Composable
  security of delegated quantum computation},}\ }in\ \href@noop {} {\emph
  {\bibinfo {booktitle} {Advances in Cryptology--ASIACRYPT 2014}}}\ (\bibinfo
  {publisher} {Springer},\ \bibinfo {year} {2014})\ pp.\ \bibinfo {pages}
  {406--425}\BibitemShut {NoStop}%
\bibitem [{\citenamefont {Fitzsimons}\ and\ \citenamefont
  {Kashefi}(2012)}]{Fitzsimons:2012}%
  \BibitemOpen
  \bibfield  {author} {\bibinfo {author} {\bibfnamefont {J.~F.}\ \bibnamefont
  {Fitzsimons}}\ and\ \bibinfo {author} {\bibfnamefont {E.}~\bibnamefont
  {Kashefi}},\ }\href@noop {} {\enquote {\bibinfo {title} {Unconditionally
  verifiable blind computation},}\ } (\bibinfo {year} {2012}),\ \Eprint
  {http://arxiv.org/abs/quant-ph/1203.5217} {arXiv:quant-ph/1203.5217}
  \BibitemShut {NoStop}%
\bibitem [{\citenamefont {Morimae}(2014)}]{Morimae2014}%
  \BibitemOpen
  \bibfield  {author} {\bibinfo {author} {\bibfnamefont {T.}~\bibnamefont
  {Morimae}},\ }\bibfield  {title} {\enquote {\bibinfo {title} {Verification
  for measurement-only blind quantum computing},}\ }\href@noop {} {\bibfield
  {journal} {\bibinfo  {journal} {Phys. Rev. A}\ }\textbf {\bibinfo {volume}
  {89}},\ \bibinfo {pages} {060302} (\bibinfo {year} {2014})}\BibitemShut
  {NoStop}%
\bibitem [{\citenamefont {Barz}\ \emph {et~al.}(2012)\citenamefont {Barz},
  \citenamefont {Kashefi}, \citenamefont {Broadbent}, \citenamefont
  {Fitzsimons}, \citenamefont {Zeilinger},\ and\ \citenamefont
  {Walther}}]{Barz:2012}%
  \BibitemOpen
  \bibfield  {author} {\bibinfo {author} {\bibfnamefont {S.}~\bibnamefont
  {Barz}}, \bibinfo {author} {\bibfnamefont {E.}~\bibnamefont {Kashefi}},
  \bibinfo {author} {\bibfnamefont {A.}~\bibnamefont {Broadbent}}, \bibinfo
  {author} {\bibfnamefont {J.~F.}\ \bibnamefont {Fitzsimons}}, \bibinfo
  {author} {\bibfnamefont {A.}~\bibnamefont {Zeilinger}}, \ and\ \bibinfo
  {author} {\bibfnamefont {P.}~\bibnamefont {Walther}},\ }\bibfield  {title}
  {\enquote {\bibinfo {title} {Demonstration of blind quantum computing},}\
  }\href@noop {} {\bibfield  {journal} {\bibinfo  {journal} {Science}\ }\textbf
  {\bibinfo {volume} {335}},\ \bibinfo {pages} {303--308} (\bibinfo {year}
  {2012})}\BibitemShut {NoStop}%
\bibitem [{\citenamefont {Barz}\ \emph {et~al.}(2013)\citenamefont {Barz},
  \citenamefont {Fitzsimons}, \citenamefont {Kashefi},\ and\ \citenamefont
  {Walther}}]{Barz:2013}%
  \BibitemOpen
  \bibfield  {author} {\bibinfo {author} {\bibfnamefont {S.}~\bibnamefont
  {Barz}}, \bibinfo {author} {\bibfnamefont {J.~F.}\ \bibnamefont
  {Fitzsimons}}, \bibinfo {author} {\bibfnamefont {E.}~\bibnamefont {Kashefi}},
  \ and\ \bibinfo {author} {\bibfnamefont {P.}~\bibnamefont {Walther}},\
  }\bibfield  {title} {\enquote {\bibinfo {title} {Experimental verification of
  quantum computation},}\ }\href@noop {} {\bibfield  {journal} {\bibinfo
  {journal} {Nature Physics}\ }\textbf {\bibinfo {volume} {9}},\ \bibinfo
  {pages} {727--731} (\bibinfo {year} {2013})}\BibitemShut {NoStop}%
\bibitem [{\citenamefont {Reichardt}\ \emph {et~al.}(2013)\citenamefont
  {Reichardt}, \citenamefont {Unger},\ and\ \citenamefont
  {Vazirani}}]{Reichardt:2013}%
  \BibitemOpen
  \bibfield  {author} {\bibinfo {author} {\bibfnamefont {B.~W.}\ \bibnamefont
  {Reichardt}}, \bibinfo {author} {\bibfnamefont {F.}~\bibnamefont {Unger}}, \
  and\ \bibinfo {author} {\bibfnamefont {U.}~\bibnamefont {Vazirani}},\
  }\bibfield  {title} {\enquote {\bibinfo {title} {Classical command of quantum
  systems},}\ }\href@noop {} {\bibfield  {journal} {\bibinfo  {journal} {Nature
  (London)}\ }\textbf {\bibinfo {volume} {496}},\ \bibinfo {pages} {456--460}
  (\bibinfo {year} {2013})}\BibitemShut {NoStop}%
\bibitem [{\citenamefont {Clauser}\ \emph {et~al.}(1969)\citenamefont
  {Clauser}, \citenamefont {Horne}, \citenamefont {Shimony},\ and\
  \citenamefont {Holt}}]{Clauser1969}%
  \BibitemOpen
  \bibfield  {author} {\bibinfo {author} {\bibfnamefont {J.~F.}\ \bibnamefont
  {Clauser}}, \bibinfo {author} {\bibfnamefont {M.~A.}\ \bibnamefont {Horne}},
  \bibinfo {author} {\bibfnamefont {A.}~\bibnamefont {Shimony}}, \ and\
  \bibinfo {author} {\bibfnamefont {R.~A.}\ \bibnamefont {Holt}},\ }\bibfield
  {title} {\enquote {\bibinfo {title} {Proposed experiment to test local
  hidden-variable theories},}\ }\href@noop {} {\bibfield  {journal} {\bibinfo
  {journal} {Phys. Rev. Lett.}\ }\textbf {\bibinfo {volume} {23}},\ \bibinfo
  {pages} {880} (\bibinfo {year} {1969})}\BibitemShut {NoStop}%
\bibitem [{\citenamefont {McKague}(2013)}]{McKague:2013}%
  \BibitemOpen
  \bibfield  {author} {\bibinfo {author} {\bibfnamefont {M.}~\bibnamefont
  {McKague}},\ }\href@noop {} {\enquote {\bibinfo {title} {Interactive proofs
  for {BQP} via self-tested graph states},}\ } (\bibinfo {year} {2013}),\
  \Eprint {http://arxiv.org/abs/quant-ph/1309.5675} {arXiv:quant-ph/1309.5675}
  \BibitemShut {NoStop}%
\bibitem [{\citenamefont {Mantri}\ \emph {et~al.}(2013)\citenamefont {Mantri},
  \citenamefont {P{\'e}rez-Delgado},\ and\ \citenamefont
  {Fitzsimons}}]{Mantri2013}%
  \BibitemOpen
  \bibfield  {author} {\bibinfo {author} {\bibfnamefont {A.}~\bibnamefont
  {Mantri}}, \bibinfo {author} {\bibfnamefont {C.~A.}\ \bibnamefont
  {P{\'e}rez-Delgado}}, \ and\ \bibinfo {author} {\bibfnamefont {J.~F.}\
  \bibnamefont {Fitzsimons}},\ }\bibfield  {title} {\enquote {\bibinfo {title}
  {Optimal blind quantum computation},}\ }\href@noop {} {\bibfield  {journal}
  {\bibinfo  {journal} {Phys. Rev. Lett.}\ }\textbf {\bibinfo {volume} {111}},\
  \bibinfo {pages} {230502} (\bibinfo {year} {2013})}\BibitemShut {NoStop}%
\bibitem [{\citenamefont {Morimae}\ and\ \citenamefont
  {Fujii}(2013{\natexlab{a}})}]{Morimae:2013dist}%
  \BibitemOpen
  \bibfield  {author} {\bibinfo {author} {\bibfnamefont {T.}~\bibnamefont
  {Morimae}}\ and\ \bibinfo {author} {\bibfnamefont {K.}~\bibnamefont
  {Fujii}},\ }\bibfield  {title} {\enquote {\bibinfo {title} {Secure
  entanglement distillation for double-server blind quantum computation},}\
  }\href@noop {} {\bibfield  {journal} {\bibinfo  {journal} {Phys. Rev. Lett.}\
  }\textbf {\bibinfo {volume} {111}},\ \bibinfo {pages} {020502} (\bibinfo
  {year} {2013}{\natexlab{a}})}\BibitemShut {NoStop}%
\bibitem [{\citenamefont {Sheng}\ and\ \citenamefont
  {Zhou}(2015)}]{Yu-Bo:2015}%
  \BibitemOpen
  \bibfield  {author} {\bibinfo {author} {\bibfnamefont {Y.-B.}\ \bibnamefont
  {Sheng}}\ and\ \bibinfo {author} {\bibfnamefont {L.}~\bibnamefont {Zhou}},\
  }\bibfield  {title} {\enquote {\bibinfo {title} {Deterministic entanglement
  distillation for secure double-server blind quantum computation},}\
  }\href@noop {} {\bibfield  {journal} {\bibinfo  {journal} {Sci. Rep.}\
  }\textbf {\bibinfo {volume} {5}},\ \bibinfo {pages} {7815} (\bibinfo {year}
  {2015})}\BibitemShut {NoStop}%
\bibitem [{\citenamefont {Morimae}\ and\ \citenamefont
  {Fujii}(2013{\natexlab{b}})}]{Morimae:2013}%
  \BibitemOpen
  \bibfield  {author} {\bibinfo {author} {\bibfnamefont {T.}~\bibnamefont
  {Morimae}}\ and\ \bibinfo {author} {\bibfnamefont {K.}~\bibnamefont
  {Fujii}},\ }\bibfield  {title} {\enquote {\bibinfo {title} {Blind quantum
  computation protocol in which alice only makes measurements},}\ }\href@noop
  {} {\bibfield  {journal} {\bibinfo  {journal} {Phys. Rev. A}\ }\textbf
  {\bibinfo {volume} {87}},\ \bibinfo {pages} {050301} (\bibinfo {year}
  {2013}{\natexlab{b}})}\BibitemShut {NoStop}%
\bibitem [{\citenamefont {Mayers}\ and\ \citenamefont
  {Yao}(1998)}]{Mayers1998}%
  \BibitemOpen
  \bibfield  {author} {\bibinfo {author} {\bibfnamefont {D.}~\bibnamefont
  {Mayers}}\ and\ \bibinfo {author} {\bibfnamefont {A.}~\bibnamefont {Yao}},\
  }\bibfield  {title} {\enquote {\bibinfo {title} {Quantum cryptography with
  imperfect apparatus},}\ }in\ \href@noop {} {\emph {\bibinfo {booktitle}
  {Foundations of Computer Science, 1998. Proceedings. 39th Annual Symposium
  on}}}\ (\bibinfo {organization} {IEEE},\ \bibinfo {year} {1998})\ pp.\
  \bibinfo {pages} {503--509}\BibitemShut {NoStop}%
\bibitem [{\citenamefont {Mayers}\ and\ \citenamefont
  {Yao}(2004)}]{Mayers2004}%
  \BibitemOpen
  \bibfield  {author} {\bibinfo {author} {\bibfnamefont {D.}~\bibnamefont
  {Mayers}}\ and\ \bibinfo {author} {\bibfnamefont {A.}~\bibnamefont {Yao}},\
  }\bibfield  {title} {\enquote {\bibinfo {title} {Self testing quantum
  apparatus},}\ }\href@noop {} {\bibfield  {journal} {\bibinfo  {journal}
  {Quantum Inf. Comput.}\ }\textbf {\bibinfo {volume} {4}},\ \bibinfo {pages}
  {273--286} (\bibinfo {year} {2004})}\BibitemShut {NoStop}%
\bibitem [{\citenamefont {McKague}\ \emph {et~al.}(2012)\citenamefont
  {McKague}, \citenamefont {Yang},\ and\ \citenamefont
  {Scarani}}]{McKague2012}%
  \BibitemOpen
  \bibfield  {author} {\bibinfo {author} {\bibfnamefont {M.}~\bibnamefont
  {McKague}}, \bibinfo {author} {\bibfnamefont {T.~H.}\ \bibnamefont {Yang}}, \
  and\ \bibinfo {author} {\bibfnamefont {V.}~\bibnamefont {Scarani}},\
  }\bibfield  {title} {\enquote {\bibinfo {title} {Robust self-testing of the
  singlet},}\ }\href@noop {} {\bibfield  {journal} {\bibinfo  {journal} {J.
  Phys. A: Math. Theor.}\ }\textbf {\bibinfo {volume} {45}},\ \bibinfo {pages}
  {455304} (\bibinfo {year} {2012})}\BibitemShut {NoStop}%
\bibitem [{\citenamefont {McKague}\ and\ \citenamefont
  {Mosca}(2011)}]{McKague2011}%
  \BibitemOpen
  \bibfield  {author} {\bibinfo {author} {\bibfnamefont {M.}~\bibnamefont
  {McKague}}\ and\ \bibinfo {author} {\bibfnamefont {M.}~\bibnamefont
  {Mosca}},\ }\bibfield  {title} {\enquote {\bibinfo {title} {Generalized
  self-testing and the security of the 6-state protocol},}\ }in\ \href@noop {}
  {\emph {\bibinfo {booktitle} {Theory of Quantum Computation, Communication,
  and Cryptography}}}\ (\bibinfo  {publisher} {Springer},\ \bibinfo {year}
  {2011})\ pp.\ \bibinfo {pages} {113--130}\BibitemShut {NoStop}%
\bibitem [{\citenamefont {Pironio}\ \emph {et~al.}(2010)\citenamefont
  {Pironio}, \citenamefont {Ac\'{i}n}, \citenamefont {Massar}, \citenamefont
  {de~La~Giroday}, \citenamefont {Matsukevich}, \citenamefont {Maunz},
  \citenamefont {Olmschenk}, \citenamefont {Hayes}, \citenamefont {Luo},
  \citenamefont {Manning},\ and\ \citenamefont {Monroe}}]{Pironio2010}%
  \BibitemOpen
  \bibfield  {author} {\bibinfo {author} {\bibfnamefont {S.}~\bibnamefont
  {Pironio}}, \bibinfo {author} {\bibfnamefont {A.}~\bibnamefont {Ac\'{i}n}},
  \bibinfo {author} {\bibfnamefont {S.}~\bibnamefont {Massar}}, \bibinfo
  {author} {\bibfnamefont {A.~Boyer}\ \bibnamefont {de~La~Giroday}}, \bibinfo
  {author} {\bibfnamefont {D.~N.}\ \bibnamefont {Matsukevich}}, \bibinfo
  {author} {\bibfnamefont {P.}~\bibnamefont {Maunz}}, \bibinfo {author}
  {\bibfnamefont {S.}~\bibnamefont {Olmschenk}}, \bibinfo {author}
  {\bibfnamefont {D.}~\bibnamefont {Hayes}}, \bibinfo {author} {\bibfnamefont
  {L.}~\bibnamefont {Luo}}, \bibinfo {author} {\bibfnamefont {T.~A.}\
  \bibnamefont {Manning}}, \ and\ \bibinfo {author} {\bibfnamefont
  {C.}~\bibnamefont {Monroe}},\ }\bibfield  {title} {\enquote {\bibinfo {title}
  {Random numbers certified by {B}ell's theorem},}\ }\href@noop {} {\bibfield
  {journal} {\bibinfo  {journal} {Nature (London)}\ }\textbf {\bibinfo {volume}
  {464}},\ \bibinfo {pages} {1021--1024} (\bibinfo {year} {2010})}\BibitemShut
  {NoStop}%
\bibitem [{\citenamefont {Hoeffding}(1963)}]{Hoeffding1963}%
  \BibitemOpen
  \bibfield  {author} {\bibinfo {author} {\bibfnamefont {W.}~\bibnamefont
  {Hoeffding}},\ }\bibfield  {title} {\enquote {\bibinfo {title} {Probability
  inequalities for sums of bounded random variables},}\ }\href@noop {}
  {\bibfield  {journal} {\bibinfo  {journal} {J. Am. Stat. Assoc.}\ }\textbf
  {\bibinfo {volume} {58}},\ \bibinfo {pages} {13--30} (\bibinfo {year}
  {1963})}\BibitemShut {NoStop}%
\bibitem [{\citenamefont {Azuma}(1967)}]{Azuma1967}%
  \BibitemOpen
  \bibfield  {author} {\bibinfo {author} {\bibfnamefont {K.}~\bibnamefont
  {Azuma}},\ }\bibfield  {title} {\enquote {\bibinfo {title} {Weighted sums of
  certain dependent random variables},}\ }\href@noop {} {\bibfield  {journal}
  {\bibinfo  {journal} {Tohoku Mathematical Journal, Second Series}\ }\textbf
  {\bibinfo {volume} {19}},\ \bibinfo {pages} {357--367} (\bibinfo {year}
  {1967})}\BibitemShut {NoStop}%
\bibitem [{\citenamefont {Gheorghiu}\ \emph {et~al.}(2015)\citenamefont
  {Gheorghiu}, \citenamefont {Kashefi},\ and\ \citenamefont
  {Wallden}}]{Gheorghiu2015}%
  \BibitemOpen
  \bibfield  {author} {\bibinfo {author} {\bibfnamefont {A.}~\bibnamefont
  {Gheorghiu}}, \bibinfo {author} {\bibfnamefont {E.}~\bibnamefont {Kashefi}},
  \ and\ \bibinfo {author} {\bibfnamefont {P.}~\bibnamefont {Wallden}},\
  }\bibfield  {title} {\enquote {\bibinfo {title} {Robustness and device
  independence of verifiable blind quantum computing},}\ }\href@noop {}
  {\bibfield  {journal} {\bibinfo  {journal} {New J. Phys.}\ }\textbf {\bibinfo
  {volume} {17}},\ \bibinfo {pages} {083040} (\bibinfo {year}
  {2015})}\BibitemShut {NoStop}%
\end{thebibliography}

%

\end{document}